\newtheorem{assumption}{Assumption}
\def\R{{\mathbb{R}}}
\def\N{{\mathbb{N}}}
\newcommand{\norm}[1]{\left\lVert#1\right\rVert}
\newcommand{\dotprod}[1]{\left< #1\right>}
\newcommand{\eE}{\mathbb{E}}
\DeclareMathOperator*{\argmin}{arg\,min}
\newcommand{\bpar}[1]{\left(#1\right)}
\author{Marien Renaud\thanks{corresponding author : marien.renaud@math.u-bordeaux.fr}\inst{1} \and
Julien Hermant\inst{1} \and
Nicolas Papadakis\inst{1}}
\institute{Univ. Bordeaux, CNRS, INRIA, Bordeaux INP, IMB, UMR 5251, F-33400 Talence, France, Corresponding author : marien.renaud@math.u-bordeaux.fr}
\begin{document}
\title{Convergence Analysis of a Proximal Stochastic Denoising Regularization Algorithm
}
\titlerunning={Analysis of a Proximal Stochastic Denoising Regularization Algorithm}
\date{}
\maketitle

\begin{abstract}
Plug-and-Play methods for image restoration are iterative algorithms that solve a variational problem to recover a clean image from a degraded observation. 
These algorithms are known to be flexible to changes of degradation and to perform state-of-the-art restoration.
Recently, significant efforts have been made to explore new stochastic algorithms based on  the Plug-and-Play or REgularization by Denoising (RED) frameworks, such as SNORE, which is a convergent stochastic gradient descent algorithm.
A variant of this algorithm, named SNORE Prox,  reaches state-of-the-art performances, especially for inpainting tasks. 
However, the convergence of SNORE Prox, that can be seen as a stochastic proximal gradient descent, has not been analyzed so far. 
In this paper, we prove the convergence of SNORE Prox under non convex assumptions.

\keywords{Image restoration, stochastic optimization, plug-and-play}
\end{abstract}

\section{Introduction}

\textbf{Image restoration} Recovering a clean image $x \in \R^d$ from a degraded observation $y \in \R^m$ is called image restoration. When a model of physical degradation is known, we have $y \sim \mathcal{N}(\mathcal{A}(x))$, with $\mathcal{A}:\R^d\to \R^m$ a deterministic operator and $\mathcal{N}$ a noise distribution. For example, a linear degradation with additive Gaussian noise can be written $y = A x + n$, with $A \in \R^{m\times d}$ and $n \sim \mathcal{N}(0, \sigma_y^2 I_m)$. With a Bayesian interpretation, image restoration is equivalent to solving the following variational problem
\begin{align}\label{eq:ideal_problem}
    \argmin_{x \in \R^d} F(x) := f(x) + \lambda g(x),
\end{align}
where $f = - \log p(\cdot|y)$ is the data-fidelity term that depends on the model of the degradation, $g = - \log p$ is the regularization term that encodes the prior model set on clean images and $\lambda$ is a regularization weight. With this formulation, the knowledge of the degradation and clean images is divided in two terms.

\textbf{The choice of learned regularization prior  is crucial.} Defining a relevant regularization is fundamental to achieve efficient restoration~\cite{mallat1999wavelet,rudin1992nonlinear}. In particular, learned regularizations such as deep image priors~\cite{chen2024bagged,ulyanov2018deep}, Plug-and-Play (PnP)~\cite{venkatakrishnan2013plug} or Regularization by Denoising (RED)~\cite{romano2017little}, provide state-of-the-art restorations.

PnP and RED define a model on clean images by applying a Gaussian denoiser $D_{\sigma}$ at each iteration of the optimization path:
\begin{align}
    x_{k+1} &= D_{\sigma} \left( x_k - \delta \nabla f(x_k) \right) \tag{PnP} \label{eq:PnP} \\
    x_{k+1} &= x_k - \delta \nabla f(x_k) - \frac{\delta \lambda}{\sigma^2}\left(x_k - D_{\sigma}(x_k)\right). \tag{RED} \label{eq:RED}
\end{align}
In \eqref{eq:PnP}, applying the denoiser $D_{\sigma}$ is interpreted as a proximal step on the regularization~$g$. In \eqref{eq:RED}, thanks to the Tweedie formula~\cite{efron2011tweedie}, the residual of the denoiser $\frac{1}{\sigma^2}\left(x - D_{\sigma}(x)\right)$ is interpreted as a gradient step on the regularization~$g$.  The Gaussian denoiser is typically a learned deep neural network~\cite{zhang2021plug}.

Notice that iterates convergence guarantees can be derived for PnP or RED methods when constraining the denoiser~\cite{hurault2022gradient,hurault2023convergentplugandplayproximaldenoiser,pesquet2021learning,ryu2019plug,sun2019block,Sun_2019,sun2021scalable,wei2024learning}. 
In this work, we will leverage on the gradient-step denoiser~\cite{hurault2022gradient,hurault2023convergentplugandplayproximaldenoiser} that writes $D_\sigma(x)=x-\nabla h_\sigma (x)$ with~$h_\sigma$ a non-convex regularization.
This structural assumption on the denoiser ensures that PnP and RED schemes optimize an explicit regularization function.

\textbf{Improved regularization with stochastic PnP or RED.} Recent works have focus on developing new PnP algorithms to improve the restoration quality~\cite{hu2023restoration,shumaylov2024weakly}.
Among these works, stochastic versions of PnP appear as a promising line of work to lower the computational cost~\cite{sun2019block,tang2020fast}. It also allows better restorations to be achieved~\cite{hu2024stochastic,laumont2023maximum}. In particular, the Stochastic deNOising REgularization (SNORE) proposed in~\cite{renaud2024plugandplayimagerestorationstochastic} is defined by
\begin{equation}
    x_{k+1} = x_k - \delta \nabla f(x_k) - \frac{\delta \lambda}{\sigma^2}\left(x_k - D_{\sigma}(x_k + \sigma z_{k+1})\right), \tag{SNORE} \label{eq:SNORE}
\end{equation}
with $z_{k+1} \sim \mathcal{N}(0, I_d)$. \eqref{eq:SNORE} ensures that the denoiser of level $\sigma$ is applied to images that contain the corresponding level of Gaussian noise.
As a side result, in~\cite{renaud2024plugandplayimagerestorationstochastic} is introduced SNORE Prox~\cite{renaud2024plugandplayimagerestorationstochastic} defined by
\begin{align}
    x_{k+1} = \text{Prox}_{\delta f}\left( x_k - \frac{\delta \lambda}{\sigma^2}\left(x_k - D_{\sigma}(x_k + \sigma z_{k+1}) \right) \right), \tag{SNORE Prox} \label{eq:SNORE_Prox}
\end{align}
where $\text{Prox}_{\delta f}(x):=\argmin_{z \in \R^d} \frac{1}{2\delta}\|x-z\|^2+f(z)$.
SNORE Prox shows state-of-the-art restoration results, especially for inpainting, but no convergence guarantees are provided for this algorithm.
Such guarantees are nevertheless crucial to ensure that the algorithm truly solves problem~\eqref{eq:ideal_problem}. As a by-product, convergence analysis can also help choosing hyper-parameters, such as the step-size~\cite{hurault2022gradient}.

\textbf{SNORE Prox is a Stochastic Proximal Gradient Descent (SPGD).}
As detailed in the next section (see relation~\eqref{eq:snoreprox}), SNORE Prox can be reformulated as a SPGD algorithm applied to problem~\eqref{eq:ideal_problem}. For image applications, it involves a non-convex regularization $g$ and a data-fidelity $f$ that might be weakly convex. 
SPGD has been introduced in~\cite{nitanda2014stochastic,xiao2014proximal} and then extensively studied in the literature. However, as illustrated in Table~\ref{tab:existing_cvg_results}, there exists no theoretical framework that exactly covers our case of interest, as convergence proofs only exist for $f$ being either convex~\cite{nitanda2014stochastic,xiao2014proximal,j2016proximal,ghadimi2016mini,atchade2017perturbed,allen2018katyusha,xu2023momentum,ding2023nonconvex} or weakly convex and Lipschitz~\cite{li2022unified}.

\begin{table}[ht]
    \centering
    \begin{tabular}{|c|c|c|c|}
        \hline
         References & Conditions on $f$ & Conditions on $g$\\
         \hline
        \cite{allen2018katyusha,atchade2017perturbed,nitanda2014stochastic} & convex & convex, $L$-smooth \\
        \cite{ding2023nonconvex,ghadimi2016mini,milzarek2023convergence,xu2023momentum} & convex & $L$-smooth, bounded variance \\
        \cite{j2016proximal,xiao2014proximal} & convex & $L$-smooth, finite sum \\
        \cite{li2022unified} &  weakly convex, Lipschitz & $L$-smooth, bounded variance \\
        this paper & weakly convex, $M$-smooth & $L$-smooth, bounded variance \\
        \hline
    \end{tabular}
    \vspace{0.2cm}
    \caption{Existing convergence results in the literature for SPGD. 
    }
    \label{tab:existing_cvg_results}\vspace*{-0.7cm} 
\end{table}

Moreover, most of the results in the literature are expressed in terms of proximal maps~\cite{ghadimi2016mini,xu2023momentum} that are related to the residual $\|x_{k+1} -x_k \|$. In this paper, we focus on convergence results in term of $\|\nabla F(x_k)\|$, which is more directly related to the convergence of the iterates to the critical points of $F$. 

\textbf{Contributions}
\textbf{(a)} In order to target a large class of data-fidelity terms in problem~\eqref{eq:ideal_problem}, we first generalize in Section~\ref{sec:residual} the SPGD residual convergence result of~\cite[Theorem 2]{ghadimi2016mini}, in the case of non-convex regularization and weakly convex data fidelity (Lemma~\ref{lemma:iterates_control_snore_prox_weakly}).
\textbf{(b)} In section~\ref{sec:criticalpoint}, we obtain critical point convergence guarantees, in terms of $\|\nabla F(x_k)\|$, on the iterations of SNORE Prox~\cite{renaud2024plugandplayimagerestorationstochastic} in both the constant step-size regime (Proposition~\ref{prop:constant_step-size_convergence}) and the non-increasing step-size one (Propositions~\ref{prop:varying_step-sizes} and~\ref{prop:decreasing_to_zero}).

\section{Convergence analysis: residuals}\label{sec:residual}
In this section, we study the convergence of residuals $||x_{k+1}-x_k||$ associated to the iterates of the SNORE Prox algorithm~\cite{renaud2024plugandplayimagerestorationstochastic} applied to the problem~\eqref{eq:ideal_problem} for a weakly convex function $f$ and a non-convex function $g$.  

For later purpose, we reformulate the SNORE Prox algorithm such that it includes both constant and non-increasing step-size regimes, as
\begin{equation}\label{snore_prox}
    x_{k+1} = \text{Prox}_{\delta_k f}\left( x_k - \frac{\delta_k \lambda}{\sigma^2}\left(x_k - D_{\sigma}(x_k + \sigma z_{k+1}) \right) \right).
\end{equation}
To make SNORE-Prox a Stochastic Proximal Gradient Descent (SPGD) algorithm, we introduce the following assumption on $D_{\sigma}$.
\begin{assumption}\label{ass:den_gradient_step}
    There exists a potential $h_{\sigma} : \R^d \to \R_+$ for the denoiser $D_{\sigma}$, i.e. $D_{\sigma} = I_d - \nabla h_{\sigma}$.
\end{assumption}
Assumption~\ref{ass:den_gradient_step} is verified for the exact MMSE denoiser, due to the Tweedie formula~\cite{efron2011tweedie}, with $h_{\sigma} = -\sigma^2 \log p_{\sigma}$, where $p_{\sigma}$ is the convolution $p \star \mathcal{N}_{\sigma}$ between the prior $p$ and the Gaussian kernel $\mathcal{N}_{\sigma} = \mathcal{N}(0,\sigma^2 I_d)$. Assumption~\ref{ass:den_gradient_step} is also verified by construction when considering a gradient-step denoiser~\cite{hurault2022gradient}, which will be used in our experiments.

Under Assumption~\ref{ass:den_gradient_step}, SNORE Prox is a stochastic proximal gradient descent applied to problem~\eqref{eq:ideal_problem} with the regularization
\begin{align}
    g(x) = g_{\sigma}(x) = \frac{1}{\sigma^2} \eE_{z \sim \mathcal{N}(0, I_d)}\left( h_{\sigma}(x + \sigma z) \right) = - \mathcal{N}_\sigma \star \log(p \star \mathcal{N}_\sigma).
\end{align}
Due to the form of $D_{\sigma}$ (Assumption~\ref{ass:den_gradient_step}), we get \begin{align}\label{eq:g}
\nabla g_{\sigma}(x) = \frac{1}{\sigma^2}\left( x -  \eE_{z \sim \mathcal{N}(0, I_d)}\left(D_{\sigma}(x + \sigma z)\right)\right).\end{align} In practice, we can only compute a stochastic approximation of $\nabla g_{\sigma}$ as
\begin{align}
\tilde \nabla g_{\sigma}(x) = \frac{1}{\sigma^2} \left(x - D_{\sigma}(x + \sigma z)\right),~z \sim \mathcal{N}(0, I_d),
\end{align}  
with the corresponding bias $\zeta_k = \tilde \nabla g_{\sigma}(x_k) - \nabla g_{\sigma}(x_k)$. 

Thus the SNORE Prox algorithm~\eqref{snore_prox} can be re-written in the form of a SPGD algorithm as
\begin{equation}
    x_{k+1} = \text{Prox}_{\delta_k f}\left( x_k - \delta_k \lambda \tilde \nabla g_{\sigma}(x_k) \right).\label{eq:snoreprox}
\end{equation}
To study the convergence of this scheme we make the following assumptions.
\begin{assumption}\label{ass:regularities_assumptions}
    (a) The denoiser $D_{\sigma}$ is $L$-Lipschitz, with $L \in \R_+$.

    \noindent
    (b) $f$ is differentiable and $\rho$-weakly convex, i.e. $f + \frac{\rho}{2}\|\cdot\|^2$ is convex, with $\rho \in \R_+$.

        \noindent
    (c) $F$ admit a lower bound $F^\ast \in \R$, i.e. $\forall x \in \R^d, F(x) \ge F^{\ast}$.
\end{assumption}
Assumption~\ref{ass:regularities_assumptions}(a) is verified for a neural network denoiser with Lipschitz activation function such as eLU. 
Note that no constraint on $L \in \R_+$ is required in the training of the denoiser to obtain convergence guarantees, contrary to~\cite{hurault2023convergentplugandplayproximaldenoiser}. However, having a small constant $L$ is crucial to use large step-sizes in the algorithm.
Assumption~\ref{ass:regularities_assumptions}(b) is verified for linear degradation, including inpainting, deblurring or super-resolution, with additive Gaussian noise. Assumption~\ref{ass:regularities_assumptions}(c) is necessary to ensure that problem~\eqref{eq:ideal_problem} is well defined and verified in practice.

Before stating the convergence of the residuals, we give a regularity result for $g_\sigma$.
\begin{lemma}\label{prop:gsmooth}
Under Assumptions~\ref{ass:den_gradient_step} and~\ref{ass:regularities_assumptions}(a), $\nabla g_\sigma$ is $\frac{L+1}{\sigma^2}$-Lipschitz.
\end{lemma}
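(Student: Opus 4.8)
The plan is to work directly from the closed-form expression for the gradient given in~\eqref{eq:g}, namely $\nabla g_\sigma(x) = \frac{1}{\sigma^2}\bpar{x - \eE_{z \sim \mathcal{N}(0,I_d)}\bpar{D_\sigma(x+\sigma z)}}$, and to bound the difference $\nabla g_\sigma(x) - \nabla g_\sigma(y)$ for arbitrary $x,y \in \R^d$. Writing this difference out, the linear term contributes $\frac{1}{\sigma^2}(x-y)$ and the denoiser term contributes $-\frac{1}{\sigma^2}\eE_z\bpar{D_\sigma(x+\sigma z) - D_\sigma(y+\sigma z)}$, so by the triangle inequality together with Jensen's inequality (to move the norm inside the expectation),
\[
\norm{\nabla g_\sigma(x) - \nabla g_\sigma(y)} \le \frac{1}{\sigma^2}\norm{x-y} + \frac{1}{\sigma^2}\,\eE_z\norm{D_\sigma(x+\sigma z) - D_\sigma(y+\sigma z)}.
\]

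The key observation is that the Gaussian perturbation $\sigma z$ is common to both arguments of $D_\sigma$, so for each fixed realization of $z$ the $L$-Lipschitz property of $D_\sigma$ from Assumption~\ref{ass:regularities_assumptions}(a) yields $\norm{D_\sigma(x+\sigma z) - D_\sigma(y+\sigma z)} \le L\norm{(x+\sigma z)-(y+\sigma z)} = L\norm{x-y}$, a bound that is uniform in $z$. Taking the expectation over $z$ and substituting into the inequality above gives $\norm{\nabla g_\sigma(x) - \nabla g_\sigma(y)} \le \frac{1+L}{\sigma^2}\norm{x-y}$, which is exactly the claim.

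I do not anticipate a genuine obstacle here; the only point deserving a word of care is the legitimacy of differentiating under the expectation sign that produces formula~\eqref{eq:g} in the first place (which would follow from dominated convergence using the Gaussian tails and the regularity of $h_\sigma$), but since~\eqref{eq:g} is already stated in the excerpt I would simply invoke it. The argument uses only Assumptions~\ref{ass:den_gradient_step} and~\ref{ass:regularities_assumptions}(a), consistent with the hypotheses of the lemma.
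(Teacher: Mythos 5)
Your argument is correct and is essentially the paper's own (very terse) proof spelled out: the paper likewise combines relation~\eqref{eq:g} with the $L$-Lipschitz property of $D_\sigma$ from Assumption~\ref{ass:regularities_assumptions}(a), the identity term giving $\frac{1}{\sigma^2}$ and the averaged denoiser term giving $\frac{L}{\sigma^2}$ via the triangle and Jensen inequalities. Nothing is missing; your remark about differentiation under the expectation is a reasonable point of care, but invoking~\eqref{eq:g} as given is exactly what the paper does.
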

\begin{proof}
Assumptions~\ref{ass:den_gradient_step}, relation~\eqref{eq:g} and Assumption~\ref{ass:regularities_assumptions}(a), give the result.
\end{proof}

\begin{lemma}\label{lemma:iterates_control_snore_prox_weakly}
Under Assumptions~\ref{ass:den_gradient_step} and~\ref{ass:regularities_assumptions}, for $\delta_0 \le \frac{\sigma^2}{\lambda (L+1) + \rho \sigma^2}$ and $(\delta_k)_{k \in \N}$ a non-increasing sequence of step-size, we get
\begin{align}
    \sum_{k=0}^{N-1} \eE_k\left(\|x_{k+1} - x_k\|^2\right) \le 2\delta_0 (F(x_0) - F^\ast) + \frac{4 \lambda^2 L^2}{\sigma^2 (1 - \delta_0 \rho)} \sum_{k=0}^N \delta_k^2.
\end{align}
\end{lemma}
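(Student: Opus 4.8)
The plan is to establish a one-step "sufficient decrease" inequality on $F$ along the iterates, then sum it telescopically. The key difficulty is that the proximal step uses $\text{Prox}_{\delta_k f}$ with $f$ only $\rho$-weakly convex, so the usual firm-nonexpansiveness of the prox is unavailable; instead I would exploit that $f+\frac{\rho}{2}\|\cdot\|^2$ is convex, which makes $\text{Prox}_{\delta_k f}$ behave like the prox of a convex function up to the factor $(1-\delta_k\rho)$ in the relevant inequalities (valid precisely because $\delta_0\rho<1$ under the step-size condition). Writing $y_k = x_k - \delta_k\lambda\tilde\nabla g_\sigma(x_k)$ so that $x_{k+1}=\text{Prox}_{\delta_k f}(y_k)$, the optimality condition for the prox gives a subgradient inclusion $\frac{1}{\delta_k}(y_k-x_{k+1})\in\partial f(x_{k+1})$ (in the limiting/Clarke sense), which I would combine with the weak-convexity inequality $f(x_k)\ge f(x_{k+1}) + \langle v, x_k-x_{k+1}\rangle - \frac{\rho}{2}\|x_k-x_{k+1}\|^2$ for $v=\frac{1}{\delta_k}(y_k-x_{k+1})\in\partial f(x_{k+1})$.

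Next I would handle the $g_\sigma$ part: by Lemma~\ref{prop:gsmooth}, $\nabla g_\sigma$ is $\frac{L+1}{\sigma^2}$-Lipschitz, so the descent lemma gives $g_\sigma(x_{k+1})\le g_\sigma(x_k) + \langle\nabla g_\sigma(x_k), x_{k+1}-x_k\rangle + \frac{L+1}{2\sigma^2}\|x_{k+1}-x_k\|^2$. Adding $\lambda$ times this to the $f$-inequality and substituting $\nabla g_\sigma(x_k) = \tilde\nabla g_\sigma(x_k) - \zeta_k$ and $\frac{1}{\delta_k}(y_k - x_{k+1}) = \frac{1}{\delta_k}(x_k - x_{k+1}) - \lambda\tilde\nabla g_\sigma(x_k)$, the stochastic-gradient terms $\pm\lambda\tilde\nabla g_\sigma(x_k)$ should cancel, leaving roughly
\begin{align}
F(x_{k+1}) \le F(x_k) - \left(\frac{1}{\delta_k} - \frac{\rho}{2} - \frac{\lambda(L+1)}{2\sigma^2}\right)\|x_{k+1}-x_k\|^2 + \lambda\langle\zeta_k, x_{k+1}-x_k\rangle.
\end{align}
The step-size bound $\delta_0\le\frac{\sigma^2}{\lambda(L+1)+\rho\sigma^2}$ (hence $\delta_k\le\delta_0$) is exactly what forces the coefficient of $\|x_{k+1}-x_k\|^2$ to be at least $\frac{1}{2\delta_k}\ge\frac{1}{2\delta_0}$, up to bookkeeping; I expect the constant $(1-\delta_0\rho)$ to enter here when I carefully track the weak-convexity slack rather than dropping it crudely.

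Finally I would take conditional expectation $\eE_k$, use that $\eE_k(\zeta_k)=0$ (the stochastic gradient $\tilde\nabla g_\sigma(x_k)$ is unbiased for $\nabla g_\sigma(x_k)$ by construction), and bound the cross term by Young's inequality: $\lambda\langle\zeta_k,x_{k+1}-x_k\rangle\le \frac{1}{4\delta_0}\|x_{k+1}-x_k\|^2 + \delta_0\lambda^2\|\zeta_k\|^2$, absorbing the first piece into the negative quadratic term. The bias is controlled via $\|\zeta_k\| = \|\tilde\nabla g_\sigma(x_k)-\nabla g_\sigma(x_k)\|$ and the $L$-Lipschitz denoiser: since $\tilde\nabla g_\sigma(x_k) - \nabla g_\sigma(x_k) = \frac{1}{\sigma^2}(\eE_z D_\sigma(x_k+\sigma z) - D_\sigma(x_k+\sigma z_{k+1}))$, one gets $\eE_k\|\zeta_k\|^2 \le \frac{1}{\sigma^4}\eE_k\|D_\sigma(x_k+\sigma z_{k+1}) - \eE_z D_\sigma(x_k+\sigma z)\|^2 \le \frac{L^2}{\sigma^2}$ after using Lipschitzness and $\eE_z\|z\|^2 = d$—more care is needed to land the stated constant $\frac{4\lambda^2 L^2}{\sigma^2(1-\delta_0\rho)}$, which I expect comes out of combining the Young constant $\delta_0$, the variance bound, and the $(1-\delta_0\rho)$ factor from step two. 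Rearranging to isolate $\frac{1}{2\delta_0}\|x_{k+1}-x_k\|^2$ (or $\frac{1-\delta_0\rho}{2\delta_0}$, matching the target), multiplying through by $2\delta_0$, and summing $k=0,\dots,N-1$ makes the $F(x_k)$ terms telescope to $F(x_0)-\eE(F(x_N))\le F(x_0)-F^\ast$, yielding the claimed bound. The main obstacle is the first step: correctly exploiting $\rho$-weak convexity of $f$ inside the proximal inequality without assuming $f$ convex or Lipschitz, and tracking the exact $(1-\delta_0\rho)$ constant rather than a cruder one.
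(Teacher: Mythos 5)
Your overall skeleton (descent lemma on $g_\sigma$ via Lemma~\ref{prop:gsmooth}, prox optimality condition, the weak-convexity inequality $\langle \nabla f(x_{k+1}), x_k-x_{k+1}\rangle \le f(x_k)-f(x_{k+1})+\frac{\rho}{2}\|x_{k+1}-x_k\|^2$, then telescoping) coincides with the paper's, and your one-step inequality with coefficient $\frac{1}{\delta_k}-\frac{\rho}{2}-\frac{\lambda(L+1)}{2\sigma^2}$ is exactly what the paper derives. Where you genuinely diverge is the treatment of the cross term $\lambda\langle\zeta_k, x_{k+1}-x_k\rangle$. First, your suggestion to "use $\eE_k(\zeta_k)=0$" does not apply here: $x_{k+1}=\mathrm{Prox}_{\delta_k f}(x_k-\delta_k\lambda\tilde\nabla g_\sigma(x_k))$ depends on the same noise $z_{k+1}$ as $\zeta_k$, so the conditional expectation of the cross term does not vanish. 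This is precisely the difficulty the paper addresses by introducing the deterministic prox-gradient map $\overline{G}_k$ (built from $\nabla g_\sigma(x_k)$ instead of $\tilde\nabla g_\sigma(x_k)$), splitting $\langle\zeta_k,G_k\rangle=\langle\zeta_k,\overline{G}_k\rangle+\langle\zeta_k,G_k-\overline{G}_k\rangle$, killing the first piece by unbiasedness, and controlling the second by Lemma~\ref{lemma:control_prox_map_weakly}, $\|G_k-\overline{G}_k\|\le\frac{\lambda}{1-\delta_k\rho}\|\zeta_k\|$, a quasi-nonexpansiveness bound whose proof uses the weak convexity of $f$ inside the prox optimality conditions. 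That lemma is the sole source of the factor $\frac{1}{1-\delta_0\rho}$ in the stated constant; it does not come from the descent-step bookkeeping as you anticipate.

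Your fallback, Young's inequality $\lambda\langle\zeta_k,x_{k+1}-x_k\rangle\le\frac{1}{4\delta_k}\|x_{k+1}-x_k\|^2+\delta_k\lambda^2\|\zeta_k\|^2$ absorbed into the quadratic decrease, is a legitimate and in fact simpler alternative that needs neither unbiasedness nor $\overline{G}_k$: together with a variance bound $\eE_k\|\zeta_k\|^2\le\frac{2L^2}{\sigma^2}$ it yields $\sum_{k=0}^{N-1}\eE_k\big(\|x_{k+1}-x_k\|^2\big)\le 4\delta_0(F(x_0)-F^\ast)+\frac{8\lambda^2L^2}{\sigma^2}\sum_{k=0}^{N}\delta_k^2$, which has the same structure and would suffice for every downstream result in the paper (Lemma~\ref{lemma:intermediate_result} onward only needs some constants $A_1,B_1$). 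But it does not reproduce the constants in the statement ($2\delta_0$ and $\frac{4\lambda^2L^2}{\sigma^2(1-\delta_0\rho)}$), and no refinement of the Young step will generate the $\frac{1}{1-\delta_0\rho}$ factor; if you want the lemma as stated you need the paper's $\overline{G}_k$ decomposition. Two smaller points: your variance bound $\frac{L^2}{\sigma^2}$ is inconsistent with your own use of $\eE_z\|z\|^2=d$ (a direct Lipschitz argument gives a dimension-dependent constant; the paper's own computation also glosses over this), and the prox is single-valued here simply because $\delta_k\rho<1$ makes the inner problem strongly convex, so no limiting-subgradient machinery is needed.
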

Lemma~\ref{lemma:iterates_control_snore_prox_weakly} generalizes~\cite[Theorem 2]{ghadimi2016mini} to include weakly convex functions $f$. The proof of this result is postponed in Appendix~\ref{sec:proof_lemma_weakly}.  Lemma~\ref{lemma:iterates_control_snore_prox_weakly} can be reformulated as a control of the proximal map $G_k = \frac{x_k - x_{k+1}}{\delta_k} = \nabla f(x_{k+1}) + \lambda \tilde \nabla g(x_k)$, which is an implicit-explicit first order derivative of $F$. We formulate this result as a control on the residual $\|x_{k+1} - x_k\|$ to make it more intuitive.

\begin{assumption}\label{ass:step-size_to_zero}
    The step-size $(\delta_k)_{k \in \N}$ are non-increasing and $\sum_{k=0}^{+\infty} \delta_k^2 < +\infty$.
\end{assumption}

\begin{corollary}\label{cor:residual_to_zero}
Under Assumptions~\ref{ass:den_gradient_step},~\ref{ass:regularities_assumptions} and~\ref{ass:step-size_to_zero}, the residual of SNORE Prox iterates $\|x_{k+1} - x_k\|$ converges to $0$ in expectation.
\end{corollary}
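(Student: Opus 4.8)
The plan is to obtain the corollary as a direct consequence of Lemma~\ref{lemma:iterates_control_snore_prox_weakly}. First I would address a small mismatch of hypotheses: Assumption~\ref{ass:step-size_to_zero} does not by itself impose the bound $\delta_0 \le \frac{\sigma^2}{\lambda(L+1)+\rho\sigma^2}$ needed to invoke Lemma~\ref{lemma:iterates_control_snore_prox_weakly}, but square-summability forces $\delta_k \to 0$, hence $\delta_k$ meets this bound from some index $k_0$ on. Since $(\delta_k)_{k \ge k_0}$ is still non increasing and its first term satisfies the required bound, and since changing finitely many initial iterates does not affect whether $\eE(\|x_{k+1}-x_k\|) \to 0$, I may assume without loss of generality that the bound holds at $k=0$ and apply Lemma~\ref{lemma:iterates_control_snore_prox_weakly} as stated.

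Then I would take the total expectation of the inequality in Lemma~\ref{lemma:iterates_control_snore_prox_weakly} and use the tower property $\eE\bigl[\eE_k[\,\cdot\,]\bigr] = \eE[\,\cdot\,]$, which replaces the left-hand side by $\sum_{k=0}^{N-1}\eE(\|x_{k+1}-x_k\|^2)$; the right-hand side is deterministic (the initialization $x_0$ and the schedule $(\delta_k)$ being fixed) and bounded uniformly in $N$, since $F(x_0)-F^\ast < +\infty$ by Assumption~\ref{ass:regularities_assumptions}(c) and $\sum_{k\ge 0}\delta_k^2 < +\infty$ by Assumption~\ref{ass:step-size_to_zero}. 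As the partial sums $\sum_{k=0}^{N-1}\eE(\|x_{k+1}-x_k\|^2)$ are non-decreasing in $N$ and bounded above, letting $N\to+\infty$ yields $\sum_{k=0}^{+\infty}\eE(\|x_{k+1}-x_k\|^2) < +\infty$, so its general term tends to $0$, i.e. $\eE(\|x_{k+1}-x_k\|^2)\to 0$.

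Finally I would move from the second moment to the first by Jensen's inequality (equivalently Cauchy--Schwarz), $\eE(\|x_{k+1}-x_k\|) \le \sqrt{\eE(\|x_{k+1}-x_k\|^2)}$, whose right-hand side tends to $0$; this is precisely the claimed convergence of the residual in expectation. I do not anticipate any real obstacle here: the argument is essentially an application of Lemma~\ref{lemma:iterates_control_snore_prox_weakly} combined with square-summability of the step-sizes. The only two points requiring a touch of care are the passage from the conditional estimate of the lemma to an unconditional one via the tower property, and the reduction to the regime where the step-size bound of the lemma holds, done by discarding the finitely many iterates before $k_0$.
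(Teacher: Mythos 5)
Your main chain of reasoning is exactly the intended one: Lemma~\ref{lemma:iterates_control_snore_prox_weakly} together with Assumption~\ref{ass:step-size_to_zero} gives a bound on $\sum_{k}\eE\bigl(\|x_{k+1}-x_k\|^2\bigr)$ that is uniform in $N$ (after taking total expectation via the tower property, consistent with how the paper uses the lemma), hence the series converges and its general term tends to $0$, and Jensen/Cauchy--Schwarz converts this into $\eE\bigl(\|x_{k+1}-x_k\|\bigr)\to 0$. That part is correct and is essentially the paper's (implicit) proof. The one step that does not hold up as stated is your ``without loss of generality'' reduction to handle the missing step-size bound. If you restart the analysis at the first index $k_0$ where $\delta_{k_0}\le \frac{\sigma^2}{\lambda(L+1)+\rho\sigma^2}$, the role of $x_0$ in Lemma~\ref{lemma:iterates_control_snore_prox_weakly} is played by $x_{k_0}$, which is a random variable; the lemma then bounds the tail sum by $2\delta_{k_0}\bigl(F(x_{k_0})-F^\ast\bigr)+\dots$, which is only almost surely finite. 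To conclude convergence \emph{in expectation} you would need $\eE\bigl[F(x_{k_0})\bigr]<+\infty$, and Assumptions~\ref{ass:den_gradient_step}--\ref{ass:regularities_assumptions} give no upper growth control on $f$ (weak convexity only bounds curvature from below), so this is not automatic. Moreover, for $\rho>0$ and $\delta_k>1/\rho$ the proximal operator in the first iterations need not even be single-valued, so the iterates before $k_0$ are not guaranteed to be well defined. The cleaner resolution, and surely the intended reading, is that the corollary inherits the condition $\delta_0 \le \frac{\sigma^2}{\lambda(L+1)+\rho\sigma^2}$ from Lemma~\ref{lemma:iterates_control_snore_prox_weakly}; under that condition the rest of your argument goes through verbatim.
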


Corollary~\ref{cor:residual_to_zero} ensures that residuals go to zero, even if the step-sizes decreases slowly, for instance $\sum_{k=0}^{+\infty} \delta_k = +\infty$.

\section{Convergence analysis: critical points}\label{sec:criticalpoint}
In this section, we show convergence results of the SNORE Prox iterates~\eqref{snore_prox} to a critical point $x^*$ of the function $F$, characterized by $\nabla F(x^*)=0$.  In subsection~\ref{ssec:main_result}, we present our main technical convergence result on $\|\nabla F(x_k)\|$, which requires a mild smoothness assumption on the data-fidelity $f$. We then deduce critical point convergence of the SNORE Prox algorithm with constant step-sizes (subsection~\ref{ssec:constant_step}) and non-increasing step-sizes (subsection~\ref{ssec:varying_step}).

\subsection{Main technical result}\label{ssec:main_result}
We now present our main technical result, Lemma~\ref{lemma:intermediate_result}, about critical point convergence analysis of the SNORE Prox iterates. This Lemma  provides information on the behavior of $\|\nabla F(x_k)\|$, encompassing  both constant and non-increasing step-sizes $( \delta_k )_{k \in \N}$. 
For that purpose, we first introduce a new smoothness assumption on the data-fidelity term $f$.
\begin{assumption}\label{ass:f_smooth}
    $f$ is $M$-smooth, \textit{i.e.}
    $\forall x,y \in \R^d,~ f(x) \leq f(y) + \langle \nabla f(y),x-y \rangle + \frac{M}{2}\lVert x-y \rVert^2.$
\end{assumption}
Assumption~\ref{ass:f_smooth} gives a control on the upper curvature of $f$.

\begin{remark}\label{remark:on_f_M_smooth}
It is not usual in the literature to impose smoothness on both functions
$f$ and $g_{\sigma}$~\cite{xu2023momentum,xiao2014proximal,allen2018katyusha,nitanda2014stochastic,j2016proximal,ghadimi2016mini,ding2023nonconvex}. However, in our setting, Assumption~\ref{ass:f_smooth} is necessary to obtain a convergence guarantee of the form $\min_{k=0,\dots,N}\eE[\norm{\nabla F(x_k)}^2]$ goes to zero with $N$.
Indeed, the authors of~\cite[Proposition 1]{gao2024non} highlight the following counter-example. For $(x_k)_{k \in \N}$ the iterates of SNORE Prox, taking $F(x) = f(x) + g_{\sigma}(x)$, with $f = \frac{a}{2}\norm{\cdot}^2$, $g_{\sigma} = \frac{\lambda}{2}\norm{\cdot}^2$ and $a \ge \frac{1}{\delta_k}$ implies that $\forall N \in \N,~ \min_{k=0,\dots,N}\eE[\norm{\nabla F(x_k)}^2] \ge \frac{\sigma^2}{4}$.
\end{remark}

\begin{lemma}\label{lemma:intermediate_result} 
Under Assumptions~\ref{ass:den_gradient_step},~\ref{ass:regularities_assumptions} and~\ref{ass:f_smooth}, with a non-increasing sequence of step-size verifying $\delta_0 \le \frac{\sigma^2}{\lambda (L+1) + \rho \sigma^2}$, there exist $A_1, B_1 \in \R_+$ such that the iterates of the SNORE Prox algorithm verify
\begin{align}\label{eq:bound_sum_gradients}
      \sum_{k=0}^N  \frac{\delta_k}{2}\eE\left[\norm{\nabla F(x_k)}^2 \right]
      \le A_1 \left( F(x_0) - F^\ast \right) + B_1  \sum_{k=0}^N \delta_k^2.
\end{align}
\end{lemma}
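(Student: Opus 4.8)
The plan is to convert the residual control of Lemma~\ref{lemma:iterates_control_snore_prox_weakly} into a bound on $\sum_k \delta_k\,\eE[\norm{\nabla F(x_k)}^2]$ by way of a descent inequality for $F$. Since $\delta_0 \le \frac{\sigma^2}{\lambda(L+1)+\rho\sigma^2} < \frac{1}{\rho}$, the quadratic-plus-$f$ objective defining the proximal step is strongly convex, so $x_{k+1}$ is well defined and its first-order optimality condition gives
\[
G_k := \frac{x_k - x_{k+1}}{\delta_k} = \nabla f(x_{k+1}) + \lambda\,\tilde\nabla g_\sigma(x_k),
\]
the implicit--explicit surrogate of $\nabla F$ already mentioned below Lemma~\ref{lemma:iterates_control_snore_prox_weakly}. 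By Assumption~\ref{ass:f_smooth} together with Lemma~\ref{prop:gsmooth}, $F$ obeys the descent lemma with constant $L_F := M + \frac{\lambda(L+1)}{\sigma^2}$; applied along the iterates and conditioned on the past, and using $x_{k+1}-x_k = -\delta_k G_k$, this yields
\[
\eE_k[F(x_{k+1})] \le F(x_k) - \delta_k\,\eE_k\!\left[\langle\nabla F(x_k), G_k\rangle\right] + \frac{L_F}{2}\,\eE_k\!\left[\norm{x_{k+1}-x_k}^2\right].
\]

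The heart of the argument is to lower bound $\eE_k[\langle\nabla F(x_k),G_k\rangle]$ by a multiple of $\norm{\nabla F(x_k)}^2$ up to a residual term. Decompose $G_k - \nabla F(x_k) = (\nabla f(x_{k+1}) - \nabla f(x_k)) + \lambda\zeta_k$ with $\zeta_k = \tilde\nabla g_\sigma(x_k) - \nabla g_\sigma(x_k)$. The stochastic bias is conditionally unbiased, $\eE_k[\zeta_k] = 0$, so its contribution $\lambda\langle\nabla F(x_k), \eE_k[\zeta_k]\rangle$ vanishes; in particular no bound on $\eE_k[\norm{\zeta_k}^2]$ is needed here, since this enters only through Lemma~\ref{lemma:iterates_control_snore_prox_weakly}. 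The remaining error is the price of the implicit step on $f$: Assumptions~\ref{ass:regularities_assumptions}(b) and~\ref{ass:f_smooth} together force $\nabla f$ to be $\max(M,\rho)$-Lipschitz, so $\norm{\nabla f(x_{k+1}) - \nabla f(x_k)} \le \max(M,\rho)\norm{x_{k+1}-x_k}$, and Cauchy--Schwarz, Young's and Jensen's inequalities give $\eE_k[\langle\nabla F(x_k), G_k\rangle] \ge \frac34\norm{\nabla F(x_k)}^2 - \max(M,\rho)^2\,\eE_k[\norm{x_{k+1}-x_k}^2]$. Substituting, bounding $\delta_k \le \delta_0$, and rearranging leads to
\[
\frac{3\delta_k}{4}\,\norm{\nabla F(x_k)}^2 \le F(x_k) - \eE_k[F(x_{k+1})] + C\,\eE_k\!\left[\norm{x_{k+1}-x_k}^2\right], \qquad C := \delta_0\max(M,\rho)^2 + \frac{L_F}{2}.
\]

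It then remains to take total expectation, sum over $k = 0,\dots,N$, telescope the $F$-increments (bounded by $F(x_0) - F^\ast$ thanks to Assumption~\ref{ass:regularities_assumptions}(c)), and substitute the estimate of Lemma~\ref{lemma:iterates_control_snore_prox_weakly} for $\sum_k \eE[\norm{x_{k+1}-x_k}^2]$, the one-unit index shift being absorbed via $\sum_{k=0}^{N+1}\delta_k^2 \le 2\sum_{k=0}^{N}\delta_k^2$ (valid since $(\delta_k)$ is non increasing). Rescaling by $\frac23$ produces~\eqref{eq:bound_sum_gradients} with, e.g., $A_1 = \frac23(1 + 2C\delta_0)$ and $B_1 = \frac{16\,C\lambda^2 L^2}{3\sigma^2(1 - \delta_0\rho)}$, both in $\R_+$. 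I expect the only delicate step to be the accounting of $\langle\nabla F(x_k), G_k\rangle$: one must carefully separate the mean-zero stochastic bias from the genuinely nonzero implicit-step error $\nabla f(x_{k+1}) - \nabla f(x_k)$ and absorb the latter into the residual sum that Lemma~\ref{lemma:iterates_control_snore_prox_weakly} already controls. A naive triangle inequality $\norm{\nabla F(x_k)} \le \norm{G_k} + \norm{\nabla f(x_{k+1})-\nabla f(x_k)} + \lambda\norm{\zeta_k}$ does not suffice, because $\norm{G_k}^2 = \delta_k^{-2}\norm{x_{k+1}-x_k}^2$ would contribute an unwanted $\delta_k^{-1}\norm{x_{k+1}-x_k}^2$ term after multiplying by $\delta_k$; going through the descent lemma is what restores the correct power of $\delta_k$.
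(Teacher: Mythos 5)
Your proposal is correct and follows essentially the same route as the paper's proof: descent lemma for $F$ with $L_F = M + \frac{\lambda(L+1)}{\sigma^2}$, the prox optimality condition, conditional unbiasedness of $\tilde\nabla g_\sigma$ to remove the noise term from the cross product, the $\max(M,\rho)$-Lipschitz property of $\nabla f$ combined with Young's inequality to absorb the implicit-step error into the residual, telescoping, and Lemma~\ref{lemma:iterates_control_snore_prox_weakly} to bound $\sum_k \eE[\norm{x_{k+1}-x_k}^2]$. The only differences are cosmetic (a $3/4$--$1/4$ Young split instead of $1/2$--$1/2$, and a more explicit handling of the index shift when invoking Lemma~\ref{lemma:iterates_control_snore_prox_weakly}), which just yields slightly different but equally valid constants $A_1, B_1$.
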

\begin{proof}
Using the fact that $g_\sigma$ is $\frac{L+1}{\sigma^2}$-smooth from Lemma~\ref{prop:gsmooth} together with Assumption~\ref{ass:f_smooth}, we get that $F=f+\lambda g_\sigma$ is $L_F=M + \frac{\lambda(L+1)}{\sigma^2}$-smooth:
\begin{equation*}
    F(x_{k+1}) \leq F(x_k) + \dotprod{\nabla F(x_k),x_{k+1}-x_k} + \frac{L_F}{2}\norm{x_{k+1}-x_k}^2.
\end{equation*}
Taking conditional expectation $\mathbb{E}_k \left[ \cdot \right] = \mathbb{E} \left[ \cdot | x_k \right]$, we obtain
    \begin{align}
   \hspace{-5pt}\mathbb{E}_k \left[ F(x_{k+1}) \right] &\leq \mathbb{E}_k \left[  F(x_k)\hspace{-1pt} + \hspace{-1pt}\dotprod{\nabla F(x_k),x_{k+1}-x_k}\hspace{-1pt} +\hspace{-1pt} \frac{L_F}{2}\norm{x_{k+1}-x_k}^2 \right]\nonumber\\
    &= F(x_k)\hspace{-1pt} + \hspace{-1pt}\dotprod{\nabla F(x_k),\mathbb{E}_k\left[ x_{k+1}-x_k \right]} \hspace{-1pt}+\hspace{-1pt} \frac{L_F}{2}\mathbb{E}_k\left[ \norm{x_{k+1}-x_k}^2 \right]\hspace{-1pt}.\label{eq:l_smooth_develop}
\end{align}
The optimal condition of the proximal operator in~\eqref{eq:snoreprox} implies that 
\begin{align*}
    x_{k+1}-x_k = -\delta_k\bpar{ \lambda \tilde \nabla g_{\sigma}(x_{k}) + \nabla f(x_{k+1})}.
\end{align*}
Taking conditional expectation, we obtain
\begin{align}
    \mathbb{E}_k\left[ x_{k+1}-x_k \right] = -\delta_k \bpar{ \lambda \nabla g_{\sigma}(x_{k}) +\mathbb{E}_k\left[  \nabla f(x_{k+1}) \right]}.\label{eq:optimal_condition_exp}
\end{align}
Combining equations~(\ref{eq:optimal_condition_exp}) and~(\ref{eq:l_smooth_develop}) and recalling that $\nabla F=\nabla f+\lambda \nabla g_\sigma$, we get
\begin{align}
    &\,\mathbb{E}_k \left[ F(x_{k+1}) \right]\nonumber \\\le&\, F(x_k) -\delta_k\dotprod{\nabla F(x_k),\lambda \nabla g_{\sigma}(x_{k}) +\mathbb{E}_k\left[  \nabla f(x_{k+1}) \right]} + \frac{L_F}{2}\mathbb{E}_k\left[ \norm{x_{k+1}-x_k}^2 \right] \nonumber\\
    =&\,F(x_k) -\delta_k \norm{ \nabla F(x_k)}^2  + \frac{L_F}{2}\mathbb{E}_k\left[ \norm{x_{k+1}-x_k}^2 \right] \nonumber\\&+ \delta_k\dotprod{\nabla F(x_k), \nabla f(x_k) -\mathbb{E}_k \left[  \nabla f(x_{k+1}) \right]} \nonumber\\
    \leq &\,F(x_k) - \frac{\delta_k}{2} \norm{ \nabla F(x_k)}^2  + \frac{L_F}{2}\mathbb{E}_k\left[ \norm{x_{k+1}-x_k}^2 \right] \nonumber\\&+ \frac{\delta_k}{2}\norm{\nabla f(x_k) -\mathbb{E}_k \left[  \nabla f(x_{k+1}) \right]}^2, \label{eq:l_smooth_develop_II}
\end{align}
where the last inequality uses $\dotprod{a,b} \leq \frac{1}{2}\norm{a}^2 + \frac{1}{2}\norm{b}^2$.

As $f$ verifies both Assumptions~\ref{ass:regularities_assumptions}(b) and \ref{ass:f_smooth}, its gradient is $\overline{M} := \max (\rho,M)$-Lipschitz. Since $\norm{\mathbb{E}\left[ \nabla f(x_k) - \nabla f(x_{k+1}) \right]}^2 \leq \mathbb{E}\left[\norm{ \nabla f(x_k) - \nabla f(x_{k+1})}^2 \right]$, we use this $\overline{M}$-Lipschitz property on $\nabla f$ and rearrange terms to get
\begin{align}\label{eq:bound_gradient}
    \frac{\delta_k}{2}\norm{\nabla F(x_k)}^2 &\leq F(x_k) -  \mathbb{E}_k \left[ F(x_{k+1}) \right] + \frac{L_F + \delta_k \overline{M}^2}{2} \mathbb{E}_k\hspace{-0.05cm}\left[ \norm{x_{k+1}-x_k}^2 \right]\hspace{-0.05cm}.
\end{align}
By taking the expectation and summing for $k$ between $0$ and $N$, we get
\begin{align*}
    &\sum_{k=0}^N \frac{\delta_k}{2} \mathbb{E} \left[\norm{\nabla F(x_k)}^2\right] \\\leq &\sum_{k=0}^N \mathbb{E} \left[F(x_k) -   F(x_{k+1}) \right] + \frac{L_F + \delta_0 \overline{M}^2}{2} \sum_{k=0}^N \mathbb{E}_k\left[ \norm{x_{k+1}-x_k}^2 \right].
\end{align*}
Thanks to Lemma~\ref{lemma:iterates_control_snore_prox_weakly} and the telescopic sum, we get
\begin{align*}
    &\sum_{k=0}^N \frac{\delta_k}{2} \mathbb{E} \left[\norm{\nabla F(x_k)}^2\right] \\\leq &F(x_0) - F^\ast + \frac{L_F + \delta_0 \overline{M}^2}{2} \left( 2\delta_0 (F(x_0) - F^\ast) + \frac{4 \lambda^2 L^2}{\sigma^2 (1 - \delta_0 \rho)} \sum_{k=0}^N \delta_k^2  \right),
\end{align*}
so we  conclude the proof of Lemma~\ref{lemma:intermediate_result} with $A_1 = \bpar{1+\delta_0\bpar{L_F + \delta_0  \max (\rho,M)^2}}$ and $B_1 = \frac{4\lambda^2 L^2}{\sigma^2 }\frac{L_F + \delta_0  \max (\rho,M)^2}{2(1 - \delta_0 \rho)}$.
\end{proof}

We now deduce from Lemma~\ref{lemma:intermediate_result} different convergence results depending on the choice of step-sizes $(\delta_k)_{k \in \N}$.
\subsection{Targeting critical point neighborhood  with constant step-size}\label{ssec:constant_step} In the case of constant step-sizes $\delta_k=\delta$, we first show that the SNORE Prox iterates converge to a critical point neighborhood, parametrized by $\delta$.

\begin{proposition}[Constant step-sizes]\label{prop:constant_step-size_convergence}
Under Assumptions~\ref{ass:den_gradient_step},~\ref{ass:regularities_assumptions} and~\ref{ass:f_smooth}, with $\delta_k = \delta \le \frac{\sigma^2}{\lambda (L+1) + \rho \sigma^2}$ for all $k \in \N$, there exist $A_2, B_2 \in \R_+$, such that the iterates $(x_k)_{k \in \N}$ of the SNORE Prox algorithm verify
\begin{equation}
     \frac{1}{N+1}\sum_{k=0}^N  \eE\left[\norm{\nabla F(x_k)}^2 \right] \leq \frac{A_2}{\delta(N+1)}\left( F(x_0) - F^\ast \right)\nonumber + B_2 \delta.
\end{equation}
\end{proposition}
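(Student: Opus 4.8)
The plan is to specialize Lemma~\ref{lemma:intermediate_result} to the constant step-size case and then divide through by the appropriate quantity. With $\delta_k=\delta$ for all $k$, the left-hand side of~\eqref{eq:bound_sum_gradients} becomes $\frac{\delta}{2}\sum_{k=0}^N \eE[\norm{\nabla F(x_k)}^2]$, while the sum $\sum_{k=0}^N \delta_k^2$ on the right-hand side is exactly $(N+1)\delta^2$. So Lemma~\ref{lemma:intermediate_result} directly gives
\[
\frac{\delta}{2}\sum_{k=0}^N \eE[\norm{\nabla F(x_k)}^2] \le A_1(F(x_0)-F^\ast) + B_1(N+1)\delta^2,
\]
where $A_1,B_1$ are the constants from that lemma (which do not depend on $N$, only on $\delta_0=\delta$ and the fixed problem parameters).

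Next I would multiply both sides by $\frac{2}{\delta(N+1)}$, which yields
\[
\frac{1}{N+1}\sum_{k=0}^N \eE[\norm{\nabla F(x_k)}^2] \le \frac{2A_1}{\delta(N+1)}(F(x_0)-F^\ast) + 2B_1\delta.
\]
Setting $A_2 = 2A_1$ and $B_2 = 2B_1$ gives precisely the claimed bound, and both constants are non-negative reals since $A_1,B_1\in\R_+$. One should check that the hypothesis $\delta \le \frac{\sigma^2}{\lambda(L+1)+\rho\sigma^2}$ is exactly the condition $\delta_0 \le \frac{\sigma^2}{\lambda(L+1)+\rho\sigma^2}$ needed to invoke Lemma~\ref{lemma:intermediate_result}, and that a constant sequence trivially satisfies the non-increasing requirement, so all hypotheses transfer.

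Honestly, there is no real obstacle here: the proposition is a direct corollary of Lemma~\ref{lemma:intermediate_result} obtained by the substitution $\delta_k\equiv\delta$ and normalization by $N+1$. The only thing to be slightly careful about is bookkeeping of the constants — making explicit that $A_2,B_2$ inherit their dependence on $\delta$ only through $\delta_0$ (hence are uniform in $N$), so that the first term genuinely vanishes as $N\to\infty$ and the second is a fixed $\bigO(\delta)$ floor. This is what justifies the interpretation that constant-step SNORE Prox reaches a neighborhood of a critical point whose size is controlled by $\delta$.
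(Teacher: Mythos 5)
Your proof is correct and follows exactly the paper's argument: specialize Lemma~\ref{lemma:intermediate_result} to $\delta_k\equiv\delta$, note $\sum_{k=0}^N\delta_k^2=(N+1)\delta^2$, and divide by $\delta(N+1)$, which yields $A_2=2A_1$ and $B_2=2B_1$, matching the paper's explicit constants. Nothing is missing.
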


Proposition~\ref{prop:constant_step-size_convergence} indicates that asymptotically, $\|\nabla F(x_k)\|$ is upper-bounded by the level of noise $\delta$. The inherent stochasticity of SNORE Prox prevents the iterations to converge to a critical point of $F$. Nevertheless, we can observe that the smaller the value of $\delta$, the better the accuracy of  the algorithm.

In order to make $\|\nabla F(x_k)\|$ as close to zero as possible with a constant step-size scheme,  specific mechanisms such as variance reduction~\cite{xiao2014proximal,j2016proximal} or momentum~\cite{gao2024non} are usually added to SPGD. As~\ref{eq:SNORE_Prox} does not include such kind of strategy, it  only converges to a neighborhood of a critical point.

\begin{proof}
Plugging $\delta_k = \delta$, $\forall k \in \N$ in Lemma~\ref{lemma:intermediate_result} and dividing
by $\delta (N+1)$ we get
\begin{align*}
      &\frac{1}{N+1}\sum_{k=0}^N  \eE\left[\norm{\nabla F(x_k)}^2 \right]\\
      \leq &\underbrace{2\bpar{1+\delta_0\bpar{L_F + \delta_0  \overline{M}^2}}}_{:=A_2}\frac1{\delta(N+1)}\left( F(x_0) - F^\ast \right) + \underbrace{\frac{4\lambda^2 L^2}{\sigma^2}\frac{L_F + \delta_0  \overline{M}^2}{(1 - \delta_0 \rho)}}_{:=B_2} \delta.\;\;\;\;\; \qedhere
\end{align*}
\end{proof}

\subsection{Convergence to critical points with non-increasing step-sizes}\label{ssec:varying_step}
We finally derive from Lemma~\ref{lemma:intermediate_result} a convergence result to a critical point of $F$ when considering the SNORE Prox algorithm with non-increasing step-sizes.

\begin{proposition}[non-increasing step-sizes]\label{prop:varying_step-sizes}
Under Assumption~\ref{ass:den_gradient_step},~\ref{ass:regularities_assumptions} and~\ref{ass:f_smooth}, assume that $\delta_0 \le \frac{\sigma^2}{\lambda (L+1) + \rho \sigma^2}$ and that $(\delta_k)_{k \in \N}$ is a non-increasing sequence of step-sizes. Then there exist $A_3, B_3 \in \R_+$ such that the iterates $(x_k)_{k \in \N}$ of the SNORE Prox algorithm verify
\begin{align}\label{prop:non_incre}
    \min_{k=0,\dots,N}\eE\left[\norm{\nabla F(x_k)}^2 \right] &\leq \frac{A_3}{\sum_{k=0}^N \delta_k } \left( F(x_0) - F^\ast \right) +B_3\frac{ \sum_{k=0}^N \delta_k^2}{ \sum_{k=0}^N \delta_k}.
\end{align}
Moreover, if Assumption~\ref{ass:step-size_to_zero} also holds, the iterates $(x_k)_{k \in \N}$ verify almost surely
 \begin{equation}\label{eq:cvg_series}
     \sum_{k \in \N} \delta_k\norm{\nabla F(x_k)}^2 < + \infty.
 \end{equation}
\end{proposition}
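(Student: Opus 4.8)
The plan is to derive both parts of Proposition~\ref{prop:varying_step-sizes} directly from Lemma~\ref{lemma:intermediate_result}, which already provides the master inequality $\sum_{k=0}^N \frac{\delta_k}{2}\eE[\norm{\nabla F(x_k)}^2] \le A_1(F(x_0)-F^\ast) + B_1 \sum_{k=0}^N \delta_k^2$.

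For inequality~\eqref{prop:non_incre}, I would bound the left-hand sum of Lemma~\ref{lemma:intermediate_result} from below by replacing each $\eE[\norm{\nabla F(x_k)}^2]$ with the minimum over $k=0,\dots,N$, factoring it out of the sum to get $\frac{1}{2}\bpar{\min_{k=0,\dots,N}\eE[\norm{\nabla F(x_k)}^2]}\sum_{k=0}^N \delta_k$ on the left. Dividing through by $\frac{1}{2}\sum_{k=0}^N\delta_k$ (which is positive) yields the claim with $A_3 = 2A_1$ and $B_3 = 2B_1$.

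For the almost-sure summability~\eqref{eq:cvg_series}, the idea is to combine Lemma~\ref{lemma:intermediate_result} with Assumption~\ref{ass:step-size_to_zero} via a monotone-convergence / supermartingale-type argument. Observe that inequality~\eqref{eq:bound_gradient} from the proof of Lemma~\ref{lemma:intermediate_result} reads, after taking full expectation and summing, that the partial sums $S_N := \sum_{k=0}^N \delta_k\norm{\nabla F(x_k)}^2$ have expectation bounded by $2A_1(F(x_0)-F^\ast) + 2B_1\sum_{k=0}^\infty \delta_k^2 < \infty$, uniformly in $N$, since $\sum_k \delta_k^2$ converges under Assumption~\ref{ass:step-size_to_zero}. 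The sequence $(S_N)$ is nondecreasing in $N$ (each added term is nonnegative), so by the monotone convergence theorem $\eE[\sup_N S_N] = \lim_N \eE[S_N] < \infty$, hence $\sup_N S_N < \infty$ almost surely, which is exactly $\sum_{k\in\N}\delta_k\norm{\nabla F(x_k)}^2 < +\infty$ a.s.

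The only subtlety — and the main thing to be careful about — is the passage from the in-expectation bound of Lemma~\ref{lemma:intermediate_result} to an almost-sure statement: one must invoke monotone convergence on the nondecreasing nonnegative partial sums rather than try to exchange limits naively, and one must confirm that the right-hand constant is genuinely finite, which is where Assumption~\ref{ass:step-size_to_zero} ($\sum_k \delta_k^2 < \infty$) is essential and the mere non-increasing hypothesis would not suffice. Everything else is bookkeeping with the constants $A_1, B_1$ already produced in the proof of Lemma~\ref{lemma:intermediate_result}.
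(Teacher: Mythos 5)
Your derivation of inequality~\eqref{prop:non_incre} is exactly the paper's: bound each term of the sum in Lemma~\ref{lemma:intermediate_result} below by the minimum, factor out $\sum_{k=0}^N\delta_k$, and divide, giving $A_3=2A_1$, $B_3=2B_1$. For the almost-sure statement~\eqref{eq:cvg_series} you take a genuinely different, and in fact more elementary, route. The paper works pathwise: it subtracts $F^\ast$ in the one-step inequality~\eqref{eq:bound_gradient}, notes that $F(x_k)-F^\ast$, $\frac{\delta_k}{2}\norm{\nabla F(x_k)}^2$ and $\mathbb{E}_k[\norm{x_{k+1}-x_k}^2]$ are nonnegative with $\sum_k\mathbb{E}_k[\norm{x_{k+1}-x_k}^2]<\infty$ by Lemma~\ref{lemma:iterates_control_snore_prox_weakly} under Assumption~\ref{ass:step-size_to_zero}, and invokes the Robbins--Siegmund theorem. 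You instead stay at the level of expectations: the partial sums $S_N=\sum_{k=0}^N\delta_k\norm{\nabla F(x_k)}^2$ are nondecreasing and nonnegative with $\eE[S_N]\le 2A_1(F(x_0)-F^\ast)+2B_1\sum_{k}\delta_k^2<\infty$ uniformly in $N$ (here Assumption~\ref{ass:step-size_to_zero} is indispensable, as you note), so monotone convergence gives $\eE[\sup_N S_N]<\infty$ and hence $\sup_N S_N<\infty$ almost surely. This argument is correct and needs nothing beyond Lemma~\ref{lemma:intermediate_result} and Tonelli/monotone convergence, so it is lighter than the supermartingale machinery; what the Robbins--Siegmund route buys in exchange is the additional almost-sure convergence of the sequence $F(x_k)-F^\ast$, a by-product your expectation-level argument does not produce, though the paper does not exploit it either.
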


\begin{remark}
Classically, the term $\sum_{k=0}^N \delta_k^2 / \sum_{k=0}^N \delta_k$ appearing in Proposition~\ref{prop:varying_step-sizes} indicates a trade-off on the choice of the step-sizes in order to ensure convergence to a critical point, see Assumption~\ref{ass:step-size_tradeoff}. Although such step-size rules are classic in stochastic gradient descent optimization, to our knowledge it is not the case considering SPGD~\cite{ghadimi2016mini}.
Our analysis thus allows for similar step-size trade-off and convergence guarantees within a stochastic proximal scheme.
\end{remark}

\begin{proof}
 We have 
\begin{equation}
    \sum_{k=0}^N  \frac{\delta_k}{2}\eE\left[\norm{\nabla F(x_k)}^2 \right] \geq \min_{k=0,\dots,N}\frac{1}{2}\eE\left[\norm{\nabla F(x_k)}^2 \right]  \sum_{k=0}^N \delta_k.
\end{equation}
Then using Lemma~\ref{lemma:intermediate_result}, we obtain  relation~\eqref{prop:non_incre}:
\begin{align}
    \min_{k=0,\dots,N}\eE\left[\norm{\nabla F(x_k)}^2 \right] \leq &\underbrace{2\bpar{1+\delta_0\bpar{L_F + \delta_0 \overline{M}^2}}}_{:=A_3}\frac{\eE\left[ F(x_0) - F^\ast \right]}{\sum_{k=0}^N \delta_k }\nonumber\\
    &+\underbrace{\frac{4\lambda^2 L^2(L_F + \delta_0 \overline{M}^2)}{\sigma^2(1-\delta_0 \rho) }}_{:=B_3}\frac{ \sum_{k=0}^N \delta_k^2}{ \sum_{k=0}^N \delta_k}.
\end{align}
Next, in order to show~\eqref{eq:cvg_series}, we  subtract $F^\ast$ from each side of Equation~\eqref{eq:bound_gradient} 
 \begin{align}
           \mathbb{E}_k \left[ F(x_{k+1}) - F^\ast \right]\leq& F(x_k) - F^\ast -\frac{\delta_k}{2}\norm{\nabla F(x_k)}^2 \nonumber\\
           &+ \bpar{\frac{L_F + \delta_k \overline{M}^2}{2}}\mathbb{E}_k\left[ \norm{x_{k+1}-x_k}^2 \right].
 \end{align}
 Note that $F(x_k)- F^\ast$, $\frac{\delta_k}{2}\norm{\nabla F(x_k)}^2 $ and $\mathbb{E}_k[ \norm{x_{k+1}-x_k}^2 ] $ are non-negative sequences. Under Assumptions~\ref{ass:den_gradient_step},~\ref{ass:regularities_assumptions} and~\ref{ass:step-size_to_zero}, according to Lemma~\ref{lemma:iterates_control_snore_prox_weakly} we have $\sum_{k \in \N}\mathbb{E}_k[ \norm{x_{k+1}-x_k}^2] < + \infty$. Using the Robbins-Siegmund theorem (Theorem~1 in \cite{RobbinsSiegmund}), we have almost surely that $\sum_{k \in \N}\delta_k\norm{\nabla F(x_k)}^2 < + \infty$.
\end{proof}

\begin{remark}
Note that Proposition~\ref{prop:varying_step-sizes} provides rates of convergence. If $\delta_k = \frac{c}{k^{\alpha}}$, with $\alpha \in (\frac{1}{2}, 1)$, then we get $\min_{k=0,\dots,N}\eE[\norm{\nabla F(x_k)}^2 ] = \mathcal{O}(N^{\alpha - 1})$. Moreover, if $\delta_k = \frac{c}{k}$, then we obtain $\min_{k=0,\dots,N}\eE[\norm{\nabla F(x_k)}^2 ] = \mathcal{O}(\frac{1}{\log(N)})$.
\end{remark}

Proposition~\ref{prop:varying_step-sizes} indicates that with an appropriate choice of step-sizes (see Assumption~\ref{ass:step-size_tradeoff}), one can obtain convergence of the iterates to a critical point of problem~\eqref{eq:ideal_problem}.
\begin{assumption}\label{ass:step-size_tradeoff}
The step-size sequence $(\delta_k)_{k \in \N}$ is non-increasing and verifies
    \begin{equation}
    \sum_{k\in \N}\delta_k = + \infty, \quad \sum_{k\in \N}\delta_k^2 < + \infty.
\end{equation}
\end{assumption}

\begin{proposition}\label{prop:decreasing_to_zero}
Under Assumptions~\ref{ass:den_gradient_step},~\ref{ass:regularities_assumptions},~\ref{ass:f_smooth} and~\ref{ass:step-size_tradeoff}, for $\delta_0 \le \frac{\sigma^2}{\lambda (L+1) + \rho \sigma^2}$, the iterates $(x_k)_{k \in \N}$ of the SNORE Prox algorithm verify almost surely
 \begin{align}
      \min_{k=0,\dots,N}\norm{\nabla F(x_k)} &\underset{N \to +\infty}{\to} 0 \label{eq:lim_min_zero}\\
      \underset{k \to +\infty}{\lim \inf}~\|\nabla F(x_k)\| &= 0. \label{eq:lim_inf_zero}
\end{align}
\end{proposition}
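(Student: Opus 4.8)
The plan is to deduce Proposition~\ref{prop:decreasing_to_zero} directly from the two conclusions of Proposition~\ref{prop:varying_step-sizes}, so essentially no new computation is needed; the work is just turning the quantitative bounds into limit statements. First I would establish~\eqref{eq:lim_min_zero}. Assumption~\ref{ass:step-size_tradeoff} gives $\sum_{k\in\N}\delta_k=+\infty$ and $\sum_{k\in\N}\delta_k^2<+\infty$, so in the bound
\begin{equation*}
    \min_{k=0,\dots,N}\eE\left[\norm{\nabla F(x_k)}^2\right] \leq \frac{A_3}{\sum_{k=0}^N \delta_k}\left(F(x_0)-F^\ast\right) + B_3\frac{\sum_{k=0}^N\delta_k^2}{\sum_{k=0}^N\delta_k}
\end{equation*}
the first term tends to $0$ because the denominator diverges, and the second term tends to $0$ because the numerator converges while the denominator diverges. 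Hence $\min_{k=0,\dots,N}\eE[\norm{\nabla F(x_k)}^2]\to 0$. This is convergence in expectation; to reach the almost-sure statement in~\eqref{eq:lim_min_zero} I would instead argue from the almost-sure summability~\eqref{eq:cvg_series}: since $\sum_{k\in\N}\delta_k\norm{\nabla F(x_k)}^2<+\infty$ a.s. while $\sum_{k\in\N}\delta_k=+\infty$, the sequence $\norm{\nabla F(x_k)}^2$ cannot stay bounded away from $0$, which already forces $\liminf_k\norm{\nabla F(x_k)}=0$ a.s., and a fortiori the non increasing sequence $N\mapsto\min_{k=0,\dots,N}\norm{\nabla F(x_k)}$ converges to $0$ a.s.

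For~\eqref{eq:lim_inf_zero} I would spell out that last contradiction argument carefully, since it is the only real content. Suppose, on an event of positive probability, that $\liminf_k\norm{\nabla F(x_k)}=c>0$. Then there is an index $k_0$ beyond which $\norm{\nabla F(x_k)}^2\ge c^2/2$, so $\sum_{k\ge k_0}\delta_k\norm{\nabla F(x_k)}^2 \ge \frac{c^2}{2}\sum_{k\ge k_0}\delta_k = +\infty$, contradicting~\eqref{eq:cvg_series}. Therefore $\liminf_k\norm{\nabla F(x_k)}=0$ almost surely. Since $\min_{k=0,\dots,N}\norm{\nabla F(x_k)}$ is non increasing in $N$ and its limit is $\inf_{k\in\N}\norm{\nabla F(x_k)}\le\liminf_k\norm{\nabla F(x_k)}=0$, relation~\eqref{eq:lim_min_zero} follows as well.

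I do not anticipate a genuine obstacle here, as everything is a soft consequence of already-proved estimates; the one point requiring a little care is the logical order — I would derive~\eqref{eq:lim_inf_zero} first from~\eqref{eq:cvg_series}, and then obtain~\eqref{eq:lim_min_zero} from it, rather than the order in which they are listed. It is also worth noting that Assumption~\ref{ass:step-size_tradeoff} implies Assumption~\ref{ass:step-size_to_zero}, so~\eqref{eq:cvg_series} is indeed available. The only mild subtlety is the distinction between the in-expectation bound of Proposition~\ref{prop:varying_step-sizes} and the desired almost-sure statement; routing the argument through the almost-sure series~\eqref{eq:cvg_series} sidesteps this cleanly, so I would not invoke the first bound of Proposition~\ref{prop:varying_step-sizes} at all in this proof.
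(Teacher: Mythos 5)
Your proof is correct, and for the first claim it takes a genuinely different (and leaner) route than the paper. For \eqref{eq:lim_inf_zero} you use the same contradiction argument as the paper, based on the almost-sure summability \eqref{eq:cvg_series} together with $\sum_k \delta_k = +\infty$ from Assumption~\ref{ass:step-size_tradeoff}; in fact your phrasing (``on an event of positive probability, $\liminf_k\norm{\nabla F(x_k)}>0$'') is the cleaner negation, whereas the paper's ``assume that a.s.\ $\liminf = c > 0$'' is slightly loose, and you correctly note that Assumption~\ref{ass:step-size_tradeoff} implies Assumption~\ref{ass:step-size_to_zero} so that \eqref{eq:cvg_series} is available. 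For \eqref{eq:lim_min_zero} the paper argues separately: it uses the in-expectation bound \eqref{prop:non_incre} of Proposition~\ref{prop:varying_step-sizes}, Cauchy--Schwarz and Jensen-type inequalities to get $\eE[Y_N]\to 0$ for $Y_N=\min_{k\le N}\norm{\nabla F(x_k)}$, and then upgrades to almost-sure convergence via monotonicity of $Y_N$ and a Markov-type argument on the limit $l$. You instead obtain \eqref{eq:lim_min_zero} as an immediate corollary of \eqref{eq:lim_inf_zero}, since $Y_N$ decreases to $\inf_k\norm{\nabla F(x_k)}\le \liminf_k\norm{\nabla F(x_k)}=0$ a.s.; this bypasses the expectation argument entirely and is logically sufficient. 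What the paper's longer route buys is the additional quantitative fact $\eE[Y_N]\to 0$ (decay in expectation of the running minimum), which your argument does not produce but which is not needed for the almost-sure statements of the proposition.
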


Proposition~\ref{prop:decreasing_to_zero} states that with an appropriate choice of step-sizes, zero is an accumulation point of $\|\nabla F(x_k)\|$.

\begin{proof}
\textbf{Proof of equation~\eqref{eq:lim_min_zero}} We first define the random variable $Y_N = \min_{k=0,\dots,N}\norm{\nabla F(x_k)}$. According to Proposition~\ref{prop:varying_step-sizes}, under Assumption~\ref{ass:step-size_tradeoff} we have $\min_{k=0,\dots,N}\hspace{-1pt}\eE[\norm{\nabla F(x_k)}^2 ] \hspace{-5pt}\underset{N \to +\infty}{\to} \hspace{-5pt}0$.
With the Cauchy-Schwarz inequality we get
\begin{align*}
&\min_{k=0,\dots,N}\eE\left[\norm{\nabla F(x_k)}^2 \right] \ge \min_{k=0,\dots,N}\eE\left[\norm{\nabla F(x_k)} \right]^2 = \left(\min_{k=0,\dots,N}\eE\left[\norm{\nabla F(x_k)} \right] \right)^2 \\
    &\ge \left(\eE\left[\min_{k=0,\dots,N} \norm{\nabla F(x_k)} \right] \right)^2 = \left(\eE\left[Y_N\right] \right)^2 \underset{N \to +\infty}{\to} 0.
\end{align*}
Thus we get $\eE\left[Y_N\right] \to 0.$
Moreover, by definition of $Y_N$, it is a non-increasing and non-negative sequence of random variables. Therefore, if we denote $\Omega$ the space of realizations, $\forall \omega \in \Omega$, $(Y_N(\omega))_{N \in \N}$ is a non-negative, non-increasing sequence, so it is converging to a limit $l(\omega) \in \R^+$.

For $\epsilon > 0$, If $l(\omega) \ge \epsilon$, then $\forall N \in \N, Y_N(\omega) \ge \epsilon$. So $\eE(Y_N) \ge \eE(\mathbbm{1}_{l \ge \epsilon} Y_N) \ge \epsilon \mathbb{P}(l \ge \epsilon)$. Due to  $\eE(Y_N) \to 0$, we get $\forall \epsilon > 0$, $\mathbb{P}(l \ge \epsilon) = 0$. Then we get $\mathbb{P}(l > 0) = \mathbb{P}\left(\lim_{\epsilon \to 0} (l \ge \epsilon)\right) = \lim_{\epsilon \to 0}\mathbb{P}(l \ge \epsilon) = 0$.

Finally, almost surely $l = 0$, i.e. $\min_{k=0,\dots,N}\norm{\nabla F(x_k)} \to 0$.

\noindent
\textbf{Proof of equation~\eqref{eq:lim_inf_zero}} We make a proof by contradiction and assume that a.s. $\underset{k \to +\infty}{\lim \inf}~\norm{\nabla F(x_k)}^2 = c > 0$. It means that a.s. there exists $N_0 \in \N$ such that $\forall k \ge N_0$, $\norm{\nabla F(x_k)}^2 \ge c$. We thus get
\begin{equation}
     \sum_{k \in \N}\delta_k\norm{\nabla F(x_k)}^2 \ge  \sum_{k \ge N_0}\delta_k\norm{\nabla F(x_k)}^2 \ge c\sum_{k \ge N_0}\delta_k = +\infty,
\end{equation}
where the last equality holds thanks to Assumption~\ref{ass:step-size_tradeoff}.
However, according to Proposition~\ref{prop:varying_step-sizes}, we have
 \begin{equation}
     \sum_{k \in \N} \delta_k\norm{\nabla F(x_k)}^2 < + \infty,~\text{a.s.}
 \end{equation}
which leads to a contradiction. So we have,  almost surely,  $\underset{k \to +\infty}{\lim \inf}~\norm{\nabla F(x_k)}^2 = 0$.
By continuity of the squared function, we get the desired result.
\end{proof}

\section{Experiments}

\begin{figure*}[!ht]
    \centering
    \includegraphics[width=\textwidth]{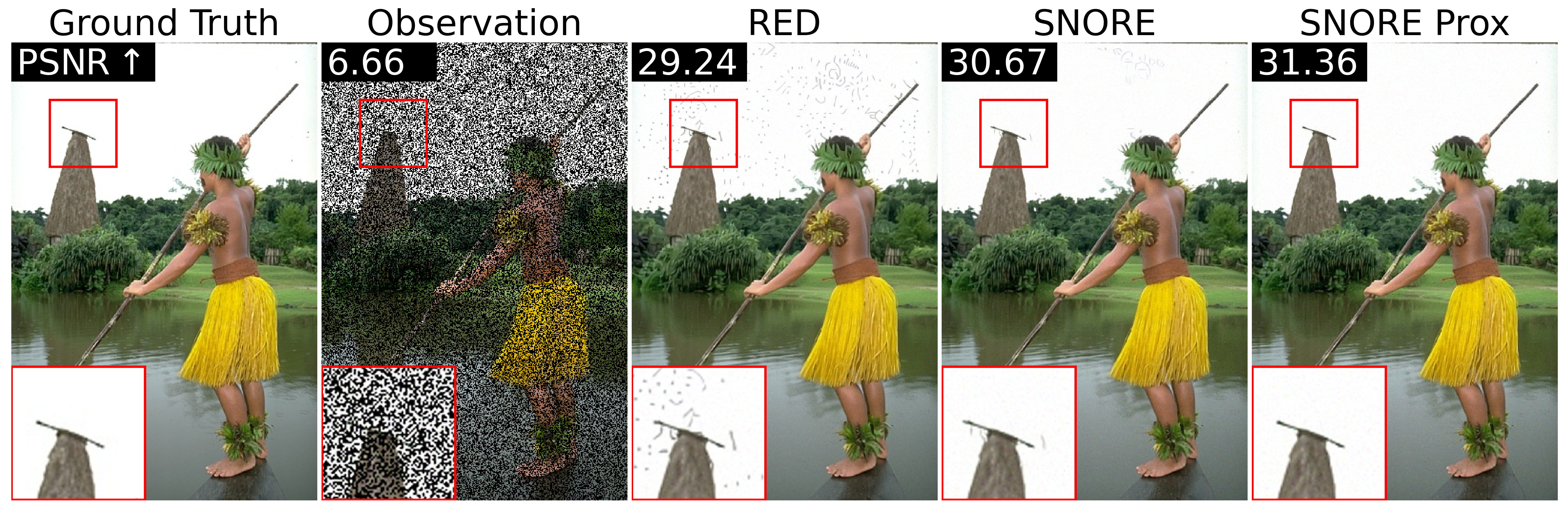}
    \caption{Inpainting, $50\%$ missing pixels and a noise level $\sigma_y = 5/255$, with various restoration methods with a GS-denoiser trained on natural images. Note that SNORE Prox produces better qualitative results than SNORE or RED. \vspace*{-0.4cm}
    }
    \label{fig:restauration}
\end{figure*}

In this section, we evaluate quantitatively and qualitatively the practical gain of SNORE Prox. We focus on inpainting with random missing pixels. The denoiser is the gradient-step DRUNet with pre-trained weights provided in~\cite{hurault2022gradient}. Experiments are run {\bf with constant step-sizes}. More details about parameter setting and additional experiments are provided in Appendix~\ref{sec:more_expe}.
On Figure~\ref{fig:restauration}, we present qualitative result of RED, SNORE and SNORE Prox. We observe that SNORE Prox succeeds to reduce artifacts compared to RED or SNORE.

\begin{figure}[!ht]
    \centering
    \begin{minipage}[t]{0.45\textwidth}\vspace{-4.3cm}
        \centering
        \resizebox{\linewidth}{!}{
        \begin{tabular}{| c | c | c c c |}
        \hline
        & Method & PSNR$\uparrow$ & SSIM$\uparrow$ & LPIPS$\downarrow$ \\
        \hline
        \multirow{7}{*}{\centering $\sigma_y = 0$} & RED & 31.26 & 0.91 & 0.07  \\
        & RED Prox & 30.31 & 0.89 & 0.12 \\ 
        & SNORE & 31.32 & 0.91 & \underline{0.05}  \\
        & SNORE Prox & \underline{31.69} & \underline{0.92} & \textbf{0.04}\\
        & Ann-SNORE & 31.65 & \underline{0.92} & \textbf{0.04}  \\
        & Ann-SNORE Prox & \textbf{31.94} & \textbf{0.93} & \textbf{0.04} \\
        & DiffPIR & 29.57 & 0.87 & 0.07 \\
        \hline
        \multirow{7}{*}{\centering $\sigma_y = \frac{5}{255}$} & RED & 30.18 & \underline{0.87} & 0.06 \\
        & RED Prox & 30.21 & \textbf{0.88} & 0.12 \\
        & SNORE & 30.15 & 0.86 & \underline{0.05} \\
        & SNORE Prox & 30.21 & 0.86 & \underline{0.05} \\
        & Ann-SNORE & \underline{30.34} & \underline{0.87} & \textbf{0.04} \\
        & Ann-SNORE Prox & \textbf{30.44} & \textbf{0.88} & \textbf{0.04} \\
        & DiffPIR & 29.98 & \textbf{0.88} & 0.06 \\
        \hline
        \end{tabular}
        }
        \caption{Inpainting result for random missing pixel with probability $p = 0.5$ on CBSD68 dataset. Best and second-best results are respectively displayed in bold and underlined. Algorithms are run with constant step-sizes.}
        \label{table:inpainting}
    \end{minipage}
    \hfill
    \begin{minipage}[t]{0.52\textwidth}
        \centering
        \includegraphics[width=\linewidth]{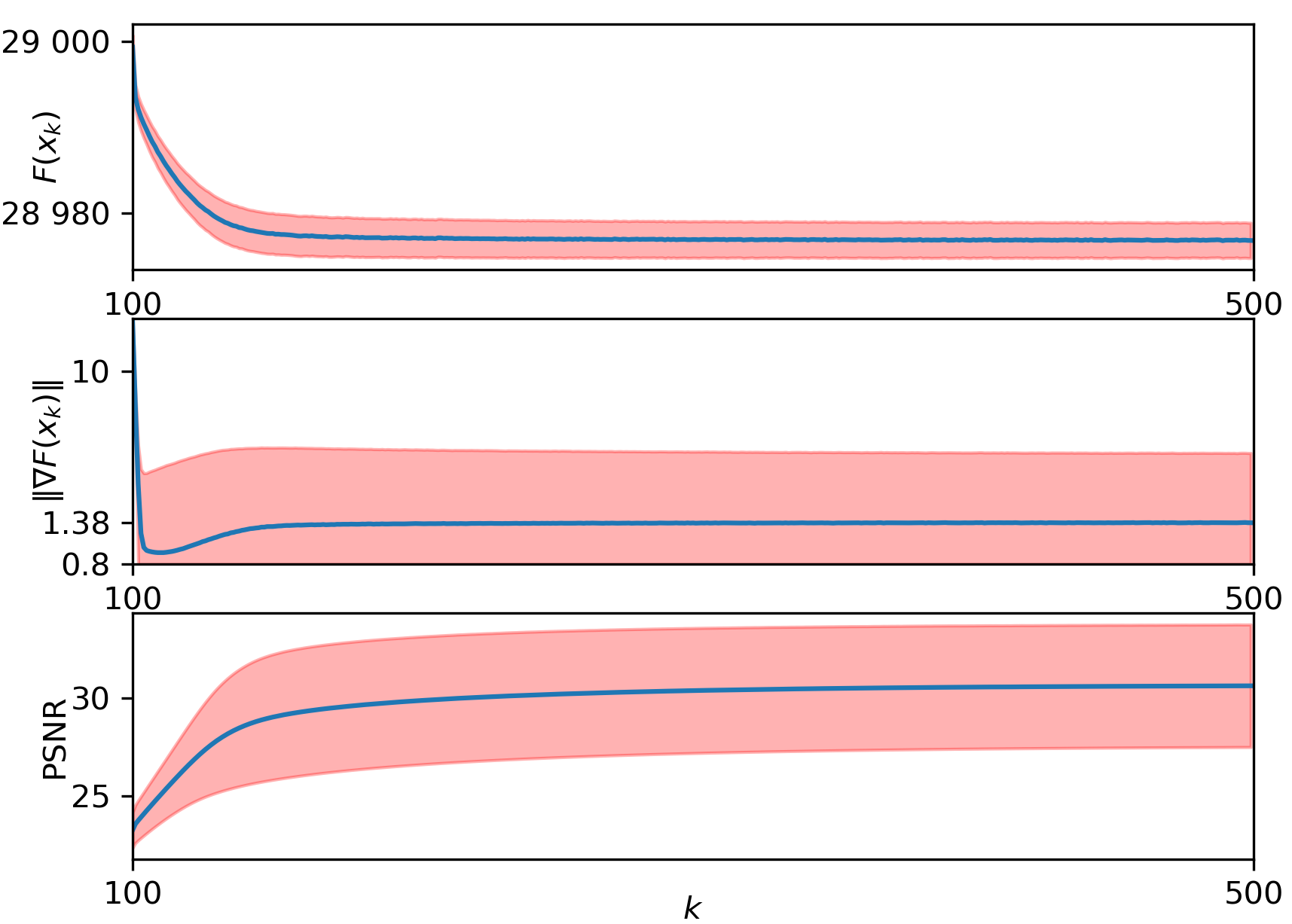}
        \caption{Convergence results of SNORE Prox with constant step-sizes in terms of decrease of the objective function $F(x_k)$, of gradient of the objective function $\nabla F(x_k)$ and PNSR. The target problem is inpainting with $50\%$ missing pixels and a noise level of $\sigma_y = 5/255$. Results are averaged on $10$ images extracted from the CBSD68 dataset.}
        \label{fig:convergence}
    \end{minipage}
\end{figure}

On Figure~\ref{table:inpainting}, we present quantitative results of SNORE Prox compared with various restoration methods on inpainting without noise in the observation or with a noise level $\sigma_y = 5/255$. Some results without noise are extracted from ~\cite[Table 2]{renaud2024plugandplayimagerestorationstochastic}. Ann-SNORE and Ann-SNORE Prox are annealed version of SNORE and SNORE Prox, where $\sigma$ and $\lambda$ vary during iterations, as proposed in~\cite{renaud2024plugandplayimagerestorationstochastic}. Note that proximal versions of SNORE improves the accuracy compared to SNORE.

On Figure~\ref{fig:convergence}, we observe the convergence of SNORE Prox algorithm. We see that the objective function is effectively minimized. Note that the gradient of $\nabla F(x_k)$ does not go to zero as this experiment is realized under the constant step-size regime.

\section{Conclusion}
In this paper, we provide a new convergence analysis for the SNORE Prox algorithm. We generalize an existing result on SPGD with a weakly-convex proximal function (Lemma~\ref{lemma:iterates_control_snore_prox_weakly}). We  present a new convergence result under the $M$-smoothness of $f$ in Lemma~\ref{lemma:intermediate_result}. From this technical analysis, we deduce the convergence of SNORE Prox to a neighborhood of a critical point of the target functional in the constant step-size regime (Proposition~\ref{prop:constant_step-size_convergence}); and to a critical point in the non-increasing step-size regime (Proposition~\ref{prop:varying_step-sizes}-\ref{prop:decreasing_to_zero}). Finally, we provide numerical experiments that illustrate  the practical convergence of SNORE Prox as well as its performance for image inpainting.

\section{Acknowledgment}
This study has been carried out with financial support from the French Direction G\'en\'erale de l'Armement and ANR project PEPR PDE-AI. Experiments presented in this paper were carried out using the PlaFRIM experimental testbed, supported by Inria, CNRS (LABRI and IMB), Université de Bordeaux, Bordeaux INP and Conseil Régional d’Aquitaine (see https://www.plafrim.fr). We thank Arthur Leclaire for his time and discussions.

\bibliography{ref}
\bibliographystyle{abbrv}

\newpage
\appendix

\section{Proof of Lemma~\ref{lemma:iterates_control_snore_prox_weakly}}\label{sec:proof_lemma_weakly}

In this section, we provide the proof of Lemma~\ref{lemma:iterates_control_snore_prox_weakly}. For completeness, we first demonstrate two classical lemmas on weakly convex functions.

\begin{lemma}\label{lemma:inequality_weakly_cvx}
If $f$ is $\rho$-weakly convex and differentiable, then we have $\forall x, y \in \R^d$,
\begin{align}
    \langle \nabla f(x), y - x\rangle \le f(y) - f(x) + \frac{\rho}{2} \|y-x\|^2.
\end{align}
\end{lemma}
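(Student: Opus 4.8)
The plan is to obtain the inequality directly from the definition of weak convexity, without any machinery beyond the first-order characterization of convexity. By Assumption~\ref{ass:regularities_assumptions}(b), the function $\phi := f + \tfrac{\rho}{2}\norm{\cdot}^2$ is convex, and since $f$ is differentiable so is $\phi$, with $\nabla \phi(x) = \nabla f(x) + \rho x$. Applying the subgradient inequality for a differentiable convex function to $\phi$ gives, for every $x,y \in \R^d$,
\[
\phi(y) \ge \phi(x) + \dotprod{\nabla \phi(x), y-x}.
\]

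Next I would substitute the definition of $\phi$ on both sides and expand: the left-hand side becomes $f(y) + \tfrac{\rho}{2}\norm{y}^2$ and the right-hand side becomes $f(x) + \tfrac{\rho}{2}\norm{x}^2 + \dotprod{\nabla f(x), y-x} + \rho \dotprod{x, y-x}$. Isolating the term $\dotprod{\nabla f(x), y-x}$, the leftover quadratic contributions assemble into $\tfrac{\rho}{2}\bpar{\norm{y}^2 - \norm{x}^2 - 2\dotprod{x, y-x}}$, and the elementary identity $\norm{y}^2 - \norm{x}^2 - 2\dotprod{x, y-x} = \norm{y-x}^2$ collapses this to $\tfrac{\rho}{2}\norm{y-x}^2$. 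Rearranging yields the claimed bound. An equivalent route is to write $f(y)-f(x) = \int_0^1 \dotprod{\nabla f(x+t(y-x)), y-x}\,dt$ and invoke monotonicity of the operator $x \mapsto \nabla f(x) + \rho x$, but the argument via $\phi$ is shorter.

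I do not expect any genuine obstacle here: the whole proof rests on the definition of $\rho$-weak convexity together with the standard first-order inequality for convex functions, and everything else is bookkeeping of the quadratic terms. The one point that deserves care is the orientation of the inequality — which of $f(x)$ and $f(y)$ carries which sign — so that, when the estimate is applied in~\eqref{eq:inequality_before_cvx2} to control $\dotprod{\nabla f(x_{k+1}), x_k - x_{k+1}}$ (i.e.\ with the gradient evaluated at $x_{k+1}$ and the displacement toward $x_k$), the resulting $f(x_{k+1})$ term is available to pair with $\lambda g_\sigma(x_{k+1})$ and rebuild the functional $F$.
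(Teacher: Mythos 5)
Your proof is correct and follows essentially the same route as the paper's own argument: apply the first-order convexity inequality to $\phi = f + \tfrac{\rho}{2}\norm{\cdot}^2$ and regroup the quadratic terms via $\norm{y}^2-\norm{x}^2-2\dotprod{x,y-x}=\norm{y-x}^2$. One remark on the orientation you flag: what your computation (and the paper's own proof) actually establishes is $\dotprod{\nabla f(x),y-x}\le f(y)-f(x)+\tfrac{\rho}{2}\norm{y-x}^2$, not the printed statement with $f(x)-f(y)$, which is false already for $\rho=0$; the printed statement is a typo, and the version you derive is exactly the one used downstream, applied with $x=x_{k+1}$, $y=x_k$ to produce the $f(x_k)-f(x_{k+1})$ term.
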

\begin{proof}
$f + \frac{\rho}{2} \|\cdot\|^2$ is convex. Thus, $\forall x, y \in \R^d$, we have
\begin{align*}
    \langle \nabla f(x) + \rho x, y - x  \rangle &\le f(y) + \frac{\rho}{2} \|y\|^2 - f(x) - \frac{\rho}{2} \|x\|^2 \\
    \langle \nabla f(x), y - x  \rangle &\le f(y)- f(x) + \frac{\rho}{2} \|y\|^2  + \frac{\rho}{2} \|x\|^2 - \rho \langle x, y \rangle \\
    \langle \nabla f(x), y - x  \rangle &\le f(y)- f(x) + \frac{\rho}{2} \|y - x\|^2.&\qedhere
\end{align*}
\end{proof}

\begin{lemma}\label{lemma:control_prox_map_weakly}
For $f$ $\rho$-weakly convex with $\rho \delta_k < 1$ and differentiable, we have $\forall x, y \in \R^d$,
\begin{align}
    \|\text{Prox}_{\delta_k f}(x) - \text{Prox}_{\delta_k f}(y)\| \le \frac{\lambda}{1 - \delta_k \rho} \|x - y\|.
\end{align}
\end{lemma}
\begin{proof}
If we note $u = \text{Prox}_{\delta_k f}(x)$ and $v =\text{Prox}_{\delta_k f}(y)$.
By the optimal condition of the proximal operator, we get
\begin{align*}
    \frac{1}{\delta_k} (u - x) + \nabla f(u) = 0 \\
    \frac{1}{\delta_k} (v - y) + \nabla f(v) = 0.
\end{align*}
Moreover by Lemma~\ref{lemma:inequality_weakly_cvx}, we have
\begin{align*}
    \langle \nabla f(u), v - u \rangle &\le f(v) - f(u) +\frac{\rho}{2} \|u-v\|^2 \\
    \langle \nabla f(v), u - v \rangle &\le f(u) - f(v) +\frac{\rho}{2} \|u-v\|^2
\end{align*}
So we get
\begin{align*}
    \frac{1}{\delta_k}\langle u - x, v - u \rangle &\ge  f(u) - f(v) - \frac{\rho}{2} \|u-v\|^2\\
    \frac{1}{\delta_k}\langle v - y, u - v \rangle &\ge  f(v) - f(u) - \frac{\rho}{2} \|u-v\|^2.
\end{align*}
By summing the two previous ineaqualities, we get
\begin{align*}
    \frac{1}{\delta_k} \langle u - v + y - x, v - u \rangle \ge - \rho \|u-v\|^2 \\
    \|x-y\| \|u-v\| \ge \langle y - x, v - u \rangle \ge \left(1 - \delta_k \rho\right) \|u-v\|^2.
\end{align*}
So we get the desire result
\begin{align*}
    \|u-v\| \le \frac{1}{1 - \delta_k \rho} \|x-y\|.
\end{align*}
\end{proof}

We recall here the statement of Lemma~\ref{lemma:iterates_control_snore_prox_weakly}.
\begin{lemma}
Under Assumptions~\ref{ass:den_gradient_step} and~\ref{ass:regularities_assumptions}, for $\delta_0 \le \frac{\sigma^2}{\lambda (L+1) + \rho \sigma^2}$ and $(\delta_k)_{k \in \N}$ a non-increasing sequence of step-size, we get
\begin{align}
    \sum_{k=0}^{N-1} \eE_k\left(\|x_{k+1} - x_k\|^2\right) \le 2\delta_0 (F(x_0) - F^\ast) + \frac{4 \lambda^2 L^2}{\sigma^2 (1 - \delta_0 \rho)} \sum_{k=0}^N \delta_k^2.
\end{align}
\end{lemma}

This proof is a generalization of Theorem 2 in~\cite{ghadimi2016mini} with a weakly convex $f$. We follow the scheme of their proof.

\begin{proof}
First, we recall that for a $\rho$-weakly convex function $f$, for all $\delta<\frac{1}{\rho}$, problem $\argmin_{z \in \R^d} \frac{1}{2\delta}\|x-z\|^2+f(z)$ is strongly convex so that $\text{Prox}_{\delta f}$ is univalued. Next we introduce the quantity $G_{k}$ from the  proximal mapping~\eqref{snore_prox}~as 
\begin{align}
    G_{k} = \frac{x_k - x_{k+1}}{\delta_k}= \frac{1}{\delta_k}\left(x_k - \text{Prox}_{\delta_k f}\left( x_k - \delta_k \lambda \tilde \nabla g_{\sigma}(x_k) \right)\right).\label{def_G2}
\end{align}
From Lemma~\ref{prop:gsmooth}, we have that $\nabla g_\sigma$ is $L_{h, \sigma} = (1+L)/\sigma^2$-Lipschitz, which gives
\begin{align}
    g_{\sigma}(x_{k+1}) &\le g_{\sigma}(x_{k}) + \langle \nabla g_{\sigma}(x_{k}), x_{k+1} - x_k \rangle +\frac{L_{h, \sigma}}{2} \|x_{k+1} - x_k \|^2 \\
    &= g_{\sigma}(x_{k}) - \delta_k \langle \nabla g_{\sigma}(x_{k}), G_{k} \rangle +\frac{L_{h, \sigma} \delta_k^2}{2} \|G_{k}\|^2 \\
    &= g_{\sigma}(x_{k}) - \delta_k \langle \tilde \nabla g_{\sigma}(x_{k}), G_{k} \rangle +\frac{L_{h, \sigma} \delta_k^2}{2} \|G_{k}\|^2 + \delta_k \langle \zeta_k, G_{k} \rangle \label{eq:first_inequalities2}
\end{align}
The optimal condition of the proximal operator implies that 
\begin{align}
    \frac{x_{k+1}-x_k}{\delta_k} + \lambda \tilde \nabla g_{\sigma}(x_{k}) + \nabla f(x_{k+1}) = 0.
\end{align}
So, $G_{k}$ can also be expressed as
\begin{align}\label{eq:prox_map_formula2}
    G_{k} = \lambda \tilde \nabla g_{\sigma}(x_{k}) + \nabla f(x_{k+1}).
\end{align}
By using equations~\eqref{eq:first_inequalities2} and~\eqref{eq:prox_map_formula2}, we get
\begin{align}
    &g_{\sigma}(x_{k+1}) \le g_{\sigma}(x_{k}) - \delta_k \langle \frac{1}{\lambda}\left( G_{k} - \nabla f(x_{k+1}) \right), G_{k} \rangle +\frac{L_{h, \sigma} \delta_k^2}{2} \|G_{k}\|^2 + \delta_k \langle \zeta_k, G_{k} \rangle \\
    &= g_{\sigma}(x_{k}) + \left(\frac{L_{h, \sigma} \delta_k^2}{2} - \frac{\delta_k}{\lambda} \right) \|G_{k}\|^2 + \frac{\delta_k}{\lambda} \langle \nabla f(x_{k+1}) , G_{k} \rangle + \delta_k \langle \zeta_k, G_{k} \rangle \\
    &= g_{\sigma}(x_{k}) + \left(\frac{L_{h, \sigma} \delta_k^2}{2} - \frac{\delta_k}{\lambda} \right) \|G_{k}\|^2 + \frac{1}{\lambda} \langle \nabla f(x_{k+1}) , x_k - x_{k+1} \rangle + \delta_k \langle \zeta_k, G_{k} \rangle. \label{eq:inequality_before_cvx2}
\end{align}
Then using equation~\eqref{eq:inequality_before_cvx2} and Lemma~\ref{lemma:inequality_weakly_cvx}, we get
\begin{align*}
   & g_{\sigma}(x_{k+1}) \\\le\, &g_{\sigma}(x_{k}) + \left(\frac{L_{h, \sigma} \delta_k^2}{2} - \frac{\delta_k}{\lambda} \right) \|G_{k}\|^2 + \frac{1}{\lambda} \left( f(x_k) - f(x_{k+1}) + \frac{\rho}{2} \|x_{k+1}-x_k \|^2 \right)\\& + \delta_k \langle \zeta_k, G_{k} \rangle \\
    =\, &g_{\sigma}(x_{k}) + \left(\frac{L_{h, \sigma} \delta_k^2}{2} - \frac{\delta_k}{\lambda} \right) \|G_{k}\|^2 + \frac{1}{\lambda} \left( f(x_k) - f(x_{k+1}) + \frac{\rho \delta_k^2}{2} \|G_{k}\|^2 \right) \\&+ \delta_k \langle \zeta_k, G_{k} \rangle \\
    =\,& g_{\sigma}(x_{k}) + \left(\frac{L_{h, \sigma} \delta_k^2}{2} + \frac{\rho \delta_k^2}{2\lambda} - \frac{\delta_k}{\lambda} \right) \|G_{k}\|^2 + \frac{1}{\lambda} \left( f(x_k) - f(x_{k+1}) \right) + \delta_k \langle \zeta_k, G_{k} \rangle
\end{align*}
Next we introduce $\overline{G}_{k}$ as
\begin{align}
    \overline{G}_{k} = \frac{1}{\delta_k}\left(x_k - \text{Prox}_{\delta_k f}\left( x_k - \delta_k \lambda \nabla g_{\sigma}(x_k) \right)\right).\label{def_Gbar}
\end{align}
By re-arranging terms we obtain
\begin{align}
   &\delta_k  \left(1 - \frac{(\lambda L_{h, \sigma} + \rho) \delta_k}{2} \right) \|G_{k}\|^2\nonumber \\\le& F(x_{k}) - F(x_{k+1}) + \delta_k \lambda \langle \zeta_k, \overline{G}_{k} \rangle + \delta_k \lambda \langle \zeta_k, G_{k} - \overline{G}_{k} \rangle.
\end{align}
Taking the expectation $\eE_k$ with respect with $x_k$, and using that $\delta_k \le \delta_0 \le \frac{1}{\lambda L_{h, \sigma} + \rho}$ and $\eE_k(\zeta_k) = 0$, we get
\begin{align}
   \delta_k \eE_k\left(\|G_{k}\|^2\right) \le 2 \eE_k\left(F(x_{k}) - F(x_{k+1})\right) + 2 \delta_k \lambda \eE_k\left(\langle \zeta_k, G_{k} - \overline{G}_{k} \rangle\right).\label{eq:control_with_scalar_product}
\end{align}
By using Lemma~\ref{lemma:control_prox_map_weakly} on equation~\eqref{eq:control_with_scalar_product}, we get
\begin{align}
   &\delta_k \eE_k\left(\|G_{k}\|^2\right) \le 2 \eE_k\left(F(x_{k}) - F(x_{k+1})\right) +  \frac{2 \delta_k\lambda^2}{1 - \delta_k \rho} \eE_k\left(\| \zeta_k \|^2\right).
\end{align}
Moreover, we have
\begin{align}
    \eE_k(\|\zeta_k\|^2) &= \frac{1}{\sigma^4} \eE_k(\|x_k - D_{\sigma}(x_k + \sigma z_{k+1}) - \eE_{z \sim \mathcal{N}(0, I_d)}\left(x_k - D_{\sigma}(x_k + \sigma z)\right)\|^2)\nonumber  \\
    &\le \frac{1}{\sigma^4} \eE_k(\eE_{z \sim \mathcal{N}(0, I_d)}\left( \|D_{\sigma}(x_k + \sigma z_{k+1}) - D_{\sigma}(x_k + \sigma z)\|\right)^2) \nonumber \\
    &\le \frac{L^2}{\sigma^2} \eE_k(\eE_{z \sim \mathcal{N}(0, I_d)}\left( \|z_{k+1} - z\|\right)^2) \le \frac{L^2}{\sigma^2} \eE_k(1 + \|z_{k+1}\|^2)\nonumber  \\
    &\le \frac{2 L^2}{\sigma^2}.
\end{align}
Thus, we obtain
\begin{align}\label{eq:bound_gadient_mapping}
    \delta_k \eE_k\left(\|G_{k}\|^2\right) \le 2 \eE_k\left(F(x_{k}) - F(x_{k+1})\right) +  \frac{4 \delta_k\lambda^2 L^2}{\sigma^2 (1 - \delta_0 \rho)},
\end{align}
where we use that $\delta_k \le \delta_0$.
Then, as $G_k = \frac{x_k - x_{k+1}}{\delta_k}$, we have
\begin{align}
    \eE_k\left(\|x_{k+1} - x_k\|^2\right) \le 2 \delta_k \eE_k\left(F(x_{k}) - F(x_{k+1})\right) +  \frac{4 \delta_k^2 \lambda^2 L^2}{\sigma^2 (1 - \delta_0 \rho)}.
\end{align}

By summing for k between $0$ and $N-1$, thanks to $(\delta_k)_{k \in \N}$ being non-increasing, we get
\begin{align*}
\sum_{k=0}^{N-1} \eE_k\left(\|x_{k+1} - x_k\|^2\right) &\le 
2\sum_{k=0}^N \delta_k (F(x_k) - F(x_{k+1})) + \frac{4\lambda^2 L^2}{\sigma^2 (1 - \delta_0 \rho)} \sum_{k=0}^N \delta_k^2\nonumber\\
&\le 2 \delta_0 (F(x_0) - F^\ast) + \frac{4\lambda^2 L^2}{\sigma^2 (1 - \delta_0 \rho)} \sum_{k=0}^N \delta_k^2.&\qedhere
\end{align*}
\end{proof}

\section{Additional numerical details}\label{sec:more_expe}
\paragraph{Proximal operator for Image Inpainting}The forward model is $y = A x + n$, with $A$ a diagonal matrix of $0$ and $1$ and $n \sim \mathcal{N}(0, \sigma_y^2 I_m)$. 

If $\sigma_y = 0$, then $f(x) = i_{A^{-1}(y)}$, with $A^{-1}(y) = \{x \in \R^d | Ax = y\}$ and 
$$i_S(x) = \left\{
\begin{array}{ll}
0&\textrm{if }x \in S \\
+\infty &\textrm{otherwise}.
\end{array}
\right.$$
The proximal step thus reads 
$$\text{Prox}_{\delta f} (x)= )Ay - Ax + x,$$
which corresponds to the orthogonal projection on the convex set $A^{-1}(y)$. In this case, $f$ is not differentiable everywhere, so Assumption~\ref{ass:regularities_assumptions}(b) is not verified.

If $\sigma_y > 0$, then $f(x) = \frac{1}{\sigma_y^2} \|y - A x\|^2$ and we get
$$\text{Prox}_{\delta f}(x) = x +\left(1 + \frac{\sigma_y^2}{\tau}\right)^{-1}(Ay - Ax).$$ 
Moreover, $f$ is convex, so Assumption~\ref{ass:regularities_assumptions} is verified with $\rho = 0$.

\paragraph{Hyper-parameters setting}
For hyper-parameters choice, a grid search has been realized for each method to find the best parameters in term of PSNR. In the case of $\sigma_y = 0$, quantitative results for RED, RED Prox, Ann-SNORE, Ann-SNORE Prox and DiffPIR have been taken from~\cite[Table 2]{renaud2024plugandplayimagerestorationstochastic}. This corresponds to take the parameters suggested by the respective authors.

On Table~\ref{table:parameter_inpainting}, we provide the hyper-parameters used for various methods:
$\delta$ the constant step-size, $\lambda$ the regularization parameter and $\sigma$ the denoiser parameter. For annealed methods, $\lambda$ is increasing between $\lambda_0$ and $\lambda_{m-1}$ along iterations and $\sigma$ is decreasing between $\sigma_0$ and $\sigma_{m-1}$, as detailed in~\cite{renaud2024plugandplayimagerestorationstochastic}. $\lambda$ is increasing when $\sigma$ is decreasing in order to ensure that the regularization maintains a significant weight. For DiffPIR~\cite{zhu2023denoising} with a noise level $\sigma_y = 5/255$, we set $T = 1000$, $t_{start} = 200$, $\zeta = 0.99$ and $\lambda = 0.005$.

\begin{table}[h]
\centering
\resizebox{\linewidth}{!}{
\begin{tabular}{| c |c | c | c| c| c|c|c|c |c|}
\hline
 & Method &
 $\delta$ & $\lambda$ & $\sigma$ &  $n_{init}$ & $\lambda_0$ & $\lambda_{m-1}$ & $\sigma_0$ & $\sigma_{m-1}$ \\
 \hline
 \multirow{7}{*}{\centering $\sigma_y = 0$} & RED & 
 1/$\alpha$ & 0.15 & 10/255 & 10 &&&& \\
 & RED Prox & 
 0.5 & 0.15 & 10/255 & 100 &&&& \\ 
 & SNORE & 
 0.5 & 0.15 & 11/255 & 100 &&&& \\
  &SNORE Prox & 
  2.0 & 0.06 & 7/255 & 100 &&&&\\
  &Ann-SNORE & 
  0.5 &  & & & 0.15 & 0.4& 50/255 & 5/255 \\
  &Ann-SNORE Prox &  
  1.0 &  & & & 0.15 & 0.15& 50/255 & 5/255\\
  \hline
 \multirow{7}{*}{\centering $\sigma_y = \frac{5}{255}$} & RED & 
 0.5 & 0.19 & 10/255 & 100 &&&& \\
 & RED Prox & 
 0.5 & 0.19 & 9/255 & 100 &&&& \\ 
 & SNORE & 
 0.5 & 0.19 & 10/255 & 100 &&&& \\
  &SNORE Prox &  
  2.0 & 0.05 & 8/255 & 100 &&&&\\
  &Ann-SNORE & 
  0.5 &  & & & 0.15 & 0.35& 50/255 & 5/255 \\
  &Ann-SNORE Prox &  
  1.0 &  & & & 0.15 & 0.5& 50/255 & 5/255\\
\hline
\end{tabular}
}\vspace*{.2cm}
\caption{Hyper-parameters setting for image inpainting.}
\label{table:parameter_inpainting}
\end{table}

\paragraph{On the $\nabla F(x_k)$ limit}
On Table~\ref{table:lim_nabla_F}, we observe that the mean value of $\nabla F(x_N)$ for $N = 500$ effectively decreases with the stepsize $\delta$. In practice, we prefer to keep high step-sizes ($\delta = 2.0$), in order to ensure fast convergence.

\begin{table}[h]
\centering
\begin{tabular}{| c |c | c | c | c| c|}
\hline
$\delta$ & 0.05 & 0.1 &0.5 & 1.0 & 2.0 \\
\hline
$\| \nabla F(x_{500}) \|$ & 0.83 & 0.89 & 1.15 & 1.26 & 1.37 \\
\hline
\end{tabular}\vspace*{.2cm}
\caption{Mean values of $\nabla F(x_{500})$ for inpainting with $50\%$ missing pixels and a noise level of $\sigma_y = 5/255$ on $10$ images extracted for CBSD68 dataset. As predicted by Proposition~\ref{prop:constant_step-size_convergence}, we observe that the smaller  $\delta$, the smaller  $\|\nabla F(x_{500})\|$.}
\label{table:lim_nabla_F}
\end{table}

\begin{figure*}[!ht]
    \centering
    \includegraphics[width=\textwidth]{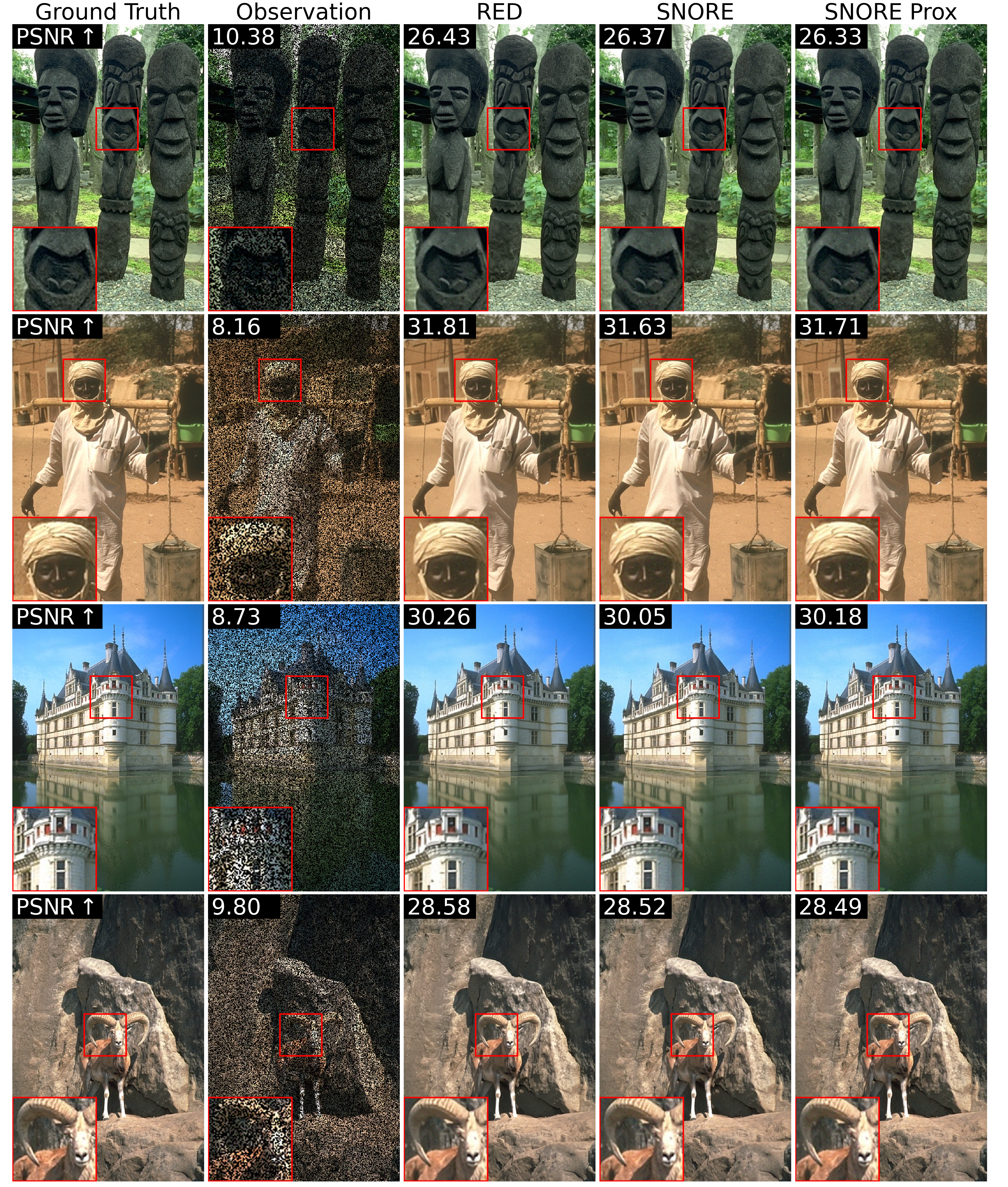}
    \caption{Additional inpainting results with $50\%$ missing pixels and a noise level $\sigma_y = 5/255$, with various restoration methods with a GS-denoiser trained on natural images.\vspace*{-0.4cm}
    }
\end{figure*}

\end{document}